\newcommand{\SINR}{\ensuremath{\hbox{SINR}}}
\newcommand{\diag}{\ensuremath{\hbox{diag}}}
\newtheorem{theorem}{Theorem}
\newtheorem{lemma}[theorem]{Lemma}
\newtheorem{claim}[theorem]{Claim}
\newcommand{\beq}{\begin{equation}}
\newcommand{\eeq}{\end{equation}}
\newcommand{\bea}{\begin{array}}
\newcommand{\ena}{\end{array}}
\newcommand{\bds}{\begin {itemize}}
\newcommand{\eds}{\end {itemize}}
\newcommand{\bdf}{\begin{definition}}
\newcommand{\blm}{\begin{lemma}}
\newcommand{\edf}{\end{definition}}
\newcommand{\elm}{\end{lemma}}
\newcommand{\bthm}{\begin{theorem}}
\newcommand{\ethm}{\end{theorem}}
\newcommand{\bprp}{\begin{prop}}
\newcommand{\eprp}{\end{prop}}
\newcommand{\bcl}{\begin{claim}}
\newcommand{\ecl}{\end{claim}}
\newcommand{\bcr}{\begin{coro}}
\newcommand{\ecr}{\end{coro}}
\newcommand{\bquest}{\begin{question}}
\newcommand{\equest}{\end{question}}
\newcommand{\larrow}{{\larrow}}
\newcommand{\tr}[1]{\rm{Tr}\left(#1\right)}
\newcommand{\bs}{\boldsymbol}
\newcommand{\dsum}{\displaystyle\sum}
\newcommand{\dmax}{\displaystyle\max}
\newcommand{\dmin}{\displaystyle\min}
\newtheorem{observation}{Observation}
\begin{document}

\title{Deep-LMS for gigabit transmission over unshielded twisted pair cables}
\author{Avi~Zanko$^{*,\natural}$, Itsik~Bergel$^*$(Senior Member, IEEE), Amir~Leshem$^*$ (Senior Member, IEEE)
\thanks{$^*$ Faculty of Engineering, Bar-Ilan University, Ramat-Gan, 52900, Israel.}
\thanks{$^{\natural}$ Corresponding author, email: avizanz@gmail.com.}
\thanks{These results will be partially presented at EUSIPCO 2016 \cite{Zanko_Gigabit_2016}.}}
\maketitle
\begin{abstract}
\label{sec:abstract}
In this paper we propose a rapidly converging LMS algorithm for crosstalk cancellation. The architecture is similar to deep neural networks, where multiple layers are adapted sequentially. The application motivating this approach is gigabit rate transmission over unshielded twisted pairs using a vectored system. The crosstalk cancellation algorithm uses an adaptive non-diagonal preprocessing matrix prior to a conventional LMS crosstalk canceler. The update of the preprocessing matrix is inspired by deep neural networks. However, since
most the operations in the Deep-LMS algorithm are linear, we
are capable of providing an exact convergence speed analysis.
The role of the preprocessing matrix is to speed up the convergence of
the conventional LMS crosstalk canceler and hence the
convergence of the overall system. The Deep-LMS is important for crosstalk cancellation in the novel G.fast standard, where traditional LMS converges very
slowly due to the ill-conditioned covariance matrix of the received signal at the extended bandwidth. Simulation results support
our analysis and show significant reduction in convergence time
compared to existing LMS variants.
\end{abstract}
\begin{IEEEkeywords}
Crosstalk canceler, DSL, LMS, G.fast.
\end{IEEEkeywords}
\section{Introduction}
\label{sec:introduction}
The Digital Subscriber Line (DSL) family of technologies
provides high speed digital data transmission over unshielded twisted
pairs (UTP) of copper wires that were originally used for
telephone services \cite{Bingham_Adsl_2000,Starr_Understanding_1999}. The performance of a DSL system may be
significantly degraded by the effect of crosstalk due to electromagnetic
coupling between adjacent pairs. We distinguish
between two types of crosstalk: near end crosstalk (NEXT) is the interference that is measured at the same side as the interfering transmitter. Far end crosstalk (FEXT) is the interference measured at the other end of the cable with respect to the interfering transmitter. In DSL systems, NEXT is avoided by using frequency division duplexing (FDD) \cite{Bingham_Adsl_2000}, or time division duplexing (TDD) \cite{teken}. FEXT suppression involves spectrum shaping, advanced precoding or crosstalk cancellation (also known as vectoring or DSM level $3$). A detailed surveys of these techniques can be found in e.g., \cite{Cioffi_Vectored_2006,Huberman_Dynamic_2012,Bergel_Signal_2013,Leung_Vectored_2013}.

Over the last few decades, DSL technologies have advanced in many directions to support the increasing demand for high data rates without replacing the existing copper infrastructure. Examples of such directions include increasing the bandwidth, using advanced algorithms for FEXT cancellation, and deploying fibers closer to the customer-premise equipment (CPE) so shorter copper lines can be used. For vectored VDSL modems, the downstream channel is weakly row-wise diagonally dominated, whereas for the upstream it is column-wise diagonal dominant. Hence, approximate matrix inversion is used instead of the ZF linear precoding \cite{Leshem_A_Low_2007}. Alternatively adaptive LMS based techniques for computing the precoder have been suggested \cite{Bergel_Convergence_2010,Binyamini_Arbitrary_2012,Cendrillon_Near_2007,Louveaux_Adaptive_2006,Louveaux_Adaptive_2010}.
Similarly, for the upstream, adaptive linear cancelers have proved to be very efficient \cite{Cendrillon_Partial_2004,Homer_Adaptive_2006}.

Gigabit over DSL was presented as early as 2003 \cite{Cioffi_Gigabit_2003}. Recently, the G.fast standard was introduced \cite{teken} to provide aggregate data rates of up to $1$Gb/sec over short lines up to $250$m in length. For this purpose, the bandwidth was increased from $30$MHz to $106$MHz in G.fast \cite{teken} or $212$MHz in the more advanced version. However, this increased bandwidth has led to new challenges: At high frequencies the channel matrices become non-diagonal dominant \cite{Leshem_A_Low_2007}.
Traditional DSL algorithms for crosstalk cancellation/precoding have thus become inefficient or converge very slowly, because of this significant increase in bandwidth.

This paper focuses on designing a per tone adaptive crosstalk canceler for the upstream transmissions in G.fast systems. Given the tremendous amount of data to be processed each second\footnote{In G.fast there are 2048/4096 sub-carriers and a discrete multitone (DMT) symbol rate of $48$kSymbol/sec \cite{teken}.}, low complexity algorithms are necessary. In the literature, several adaptive algorithms have been proposed that minimize the mean square error (MSE) at the output of the crosstalk canceler. Clearly, the most popular algorithm is the least mean square (LMS) algorithm \cite{Widrow_Adaptive_1960}. LMS is a stochastic optimization method that under certain conditions converges to the Minimum MSE (MMSE) solution. The popularity of the LMS algorithm derives primarily from its simplicity. However, it is well known that bad conditioning of the input correlation matrix may lead to slow convergence of the algorithm.

Because it is a stochastic optimization algorithm, the rate of convergence of the LMS algorithm is controlled by the step-size parameter $\mu$. While small step size leads to better precision of the steady state solution, a large step size is preferable for short transient state (fast convergence). However, if the step size is too large the system becomes unstable, and the algorithm may not converge at all. Thus, even for an optimal choice of the step size, the convergence rate strongly depends on the statistics of the input signal: the conventional LMS algorithm uses a fixed step-size parameter $\mu$ to control the correction applied to the solution at each iteration. An exponential convergence in the mean and in the mean square were shown if the step size is bound by a term that is inversely proportional to the largest eigenvalue and to the trace (respectively) of the input signal's correlation matrix and the rate of convergence was shown to be determined by the spread of the eigenvalues of the correlation matrix \cite{Widrow_Adaptive_1971,Horowitz_Performance_1981,Feuer_Convergence_1985}.

Several LMS derivatives exhibit a faster convergence.
In the normalized LMS (NLMS)\cite{Nagumo_A_learning_1967}, a step-size that is normalized with the power of the input is used (for low power inputs, a certain regulation parameter is typically used). In the diagonal step-size matrix approach, a diagonal matrix is used instead of the scalar step size $\mu$ \cite{Harris_A_variable_1986}. Another family of algorithms that converge to the optimal value in the mean square uses a time variant step-size parameter such that $\mu=\mu(n)$ is made inversely proportional to the iteration number $n$ \cite{Kushner_Stochastic_2003}. However, a decrease in the step-size impairs the tracking performance of the algorithm when the optimal solution is time variant.

So far, we have assumed a single step size $\mu$ for the update of the solution. However, it is possible to use a diagonal step-size matrix ${\bf M}$ instead of $\mu$ \cite{Harris_A_variable_1986}. The step-size matrix ${\bf M}$ can be thought of as a pre-processing of the input vector of the LMS algorithm in which an appropriate diagonal matrix ${\bf M}$ results in a "new" input vector to the LMS algorithm with an improved eigenvalue spread of its correlation matrix. Such an approach was taken in \cite{Narayan_Transform_1983}, where the LMS algorithm was applied to an orthogonal transformation of the input signal (e.g., discrete Fourier transform (DFT) and discrete cosine transform (DCT)) such that different frequency components of the filter were assigned different step sizes\footnote{Note that the orthogonal transformation by itself does not change the eigenvalue spread of the correlation matrix.}. The Leaky LMS algorithm is similar to the conventional LMS but minimizes a slightly modifying cost function \cite{Mayyas_Leaky_1997}. This modification can be interpreted as an addition of white noise with a certain variance to the input of the LMS (also known as pre-whitening). If this variance is appropriately chosen, there is improvement in the convergence rate. This improvement can be explained by the fact that the algorithm can be recast as a conventional LMS with a decreased eigenvalue spread.

Another framework of solutions to accelerate the LMS utilize a variety of averaging methods of the LMS coefficients (see for example \cite{Kushner_Stochastic_1993}). In the accelerated LMS, the LMS component are updated off-line as in the conventional LMS, but the actual filter that is applied to the data at each iteration $n$ is a certain averaged version of previous filters. It should be mentioned that the averaging method showed no asymptotic improvement; i.e., when a decreasing step-size is used, the optimal rate of convergence of both systems was the same.

A study of two adaptive filters in tandem \cite{Ho_A_study_2000} showed that the convergence rate of the system decreased compared to a single adaptive filter and increased the variance in the steady state. An extension of the result for multiple filters in tandem was given in \cite{Ho_Performance_2001}.
The tandem scheme is fundamentally different from the deep-LMS algorithm. As will be shown in what follows, the deep-LMS algorithm serially concatenates the blocks. Therefore, the equivalent scheme is the product of the two blocks and not the sum as in the tandem scheme \cite{Ho_Performance_2001}. Moreover, in the tandem scheme, both LMS components are updated in each iteration. In our scheme, the preprocessing matrix is updated infrequently.

As mentioned above, at high frequencies, G.fast channel matrices typically have no diagonal dominant structure. Hence, at these frequencies the received signal's covariance matrix is badly conditioned even in cases where the direct channel gains of all users are of the same order. Therefore, using the conventional LMS algorithms for FEXT cancellation in G.fast systems is impractical due to the long convergence times.

{\bf Main contributions:} In this paper, we propose a new
algorithm dubbed Deep-LMS, which enjoys almost the same complexity as the conventional LMS algorithm but exhibits a much faster convergence. The key idea behind the proposed algorithm can be summarized as follows: the asymptotic convergence of the LMS algorithm is guaranteed if the step size is properly chosen. Thus, some progress must be achieved by the LMS algorithm; i.e., by waiting a sufficiently long time. This progress is harnessed by the Deep-LMS by updating the adaptive preprocessing matrix such that the effective channel matrix becomes more diagonal dominant. Hence, the convergence of the
overall system is accelerated.

The update of the preprocessing matrix is inspired by deep neural networks. However, since all the operations in the Deep-LMS algorithm are linear, we are able to derive an exact analysis. In particular, we prove that if the initial SINR of all users is sufficiently high, the rate of convergence of the LMS crosstalk canceler is accelerated at each update of the preprocessing matrix (assuming that there are enough iterations between the updates).

The structure of this paper is as follows. Section \ref{sec:system_model} describes the system model and the proposed algorithm. Section \ref{sec:conventional_LMS} the conventional LMS crosstalk canceler is described. In Section \ref{sec:new_algorithm} the Deep-LMS algorithm is presented. In section \ref{sec:analysis} we analyze the transition of the Deep-LMS algorithm and discuss the effects on the overall system. The proof of the main theorem is given in Section \ref{sec:proof}. Section \ref{sec:numerical_results} provides simulation results for the Deep-LMS algorithm. Section \ref{sec:conclusion} concludes the paper.

\section{System model}
\label{sec:system_model}
In this section we describe the model of the received signal. For simplicity of presentation, the model is described for a single frequency bin. The operation at other frequency bins is similar and independent of the operation at any other bin.

\begin{figure}[t]
  \includegraphics[width=0.5\textwidth]{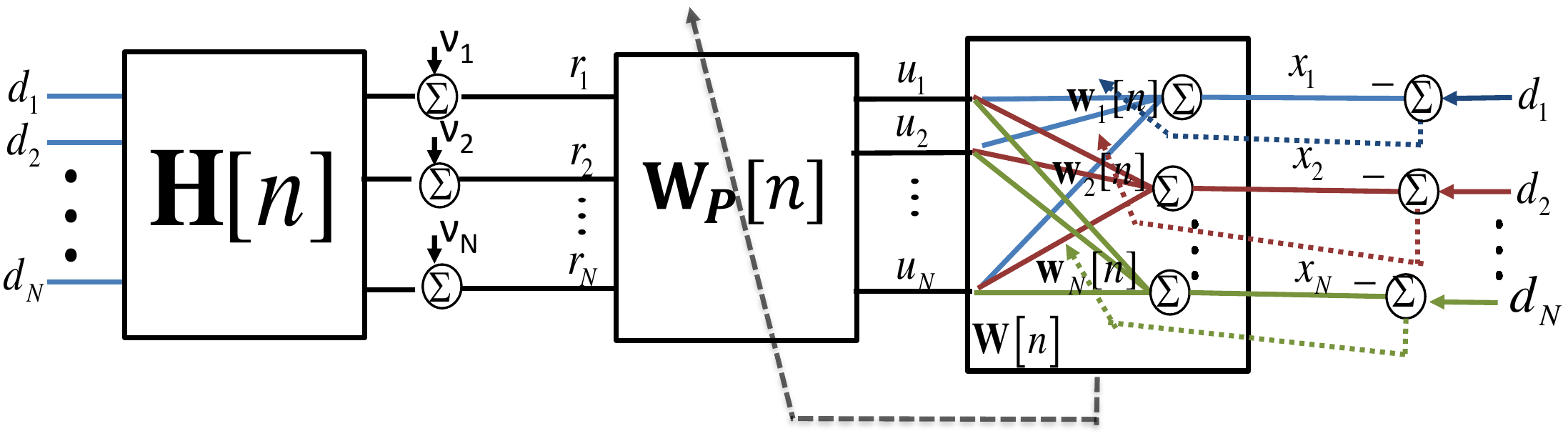}
  \caption{The Deep-LMS crosstalk canceler.}
  \label{fig:two_steps_LMS}
\end{figure}

Let ${\bf d}[n]$ denote the transmitted data symbols of all $N$ users at a certain frequency bin, where ${\bf d}[n]$ is drawn with Gaussian i.i.d components and is assumed to be known at the receiver. More formally, ${\bf d}[n]$ is a zero mean Gaussian vector with a correlation matrix of $E\{{\bf d}[n]{\bf d}^H[n]\}={\bf I}$ and $E\{{\bf d}[n]{\bf d}^H[m]\}={\bs 0}$ for $m\neq n$, where ${\bf I}$ is a $N \times N$ identity matrix and ${\bf A}^H$ is the conjugate transpose of the matrix ${\bf A}$. The received signal at the output of the channel is
\begin{equation}
\label{eq:received_signal_non_normalized}
\breve{\bf r}[n]=\sqrt{p}{\bf H}^H[n]{\bf d}[n]+\breve{\bs\nu}[n],
\end{equation}
where $p$ is the transmitted power (at that frequency bin), ${\bf H}[n]=\left[{\bf h}_{1}[n],{\bf h}_{2}[n],\cdots,{\bf h}_{N}[n]\right]$ is the channel gain matrix, $\breve{\bs\nu}[n]$ is an additive white Gaussian noise (AWGN), $\breve{\bs\nu}[n]\in\mathcal{CN}\left({\bs 0},\breve\sigma_{\nu}^2{\bf I}\right)$. For convenience, we rewrite equation (\ref{eq:received_signal_non_normalized}) by normalizing it with respect to the transmitting power:
\begin{equation}
\label{eq:r}
{\bf r}[n]={\bf H}^H[n]{\bf d}[n]+{\bs\nu}[n],
\end{equation}
such that ${\bf r}[n]=\breve{\bf r}[n]/\sqrt{p}$, ${\bs\nu}[n]\in\mathcal{CN}\left({\bs 0},\sigma_{\nu}^2{\bf I}\right)$ and $\sigma_{\nu}^2=\frac{\breve\sigma_{\nu}^2}{p}$.

\section{The conventional LMS crosstalk canceler}
\label{sec:conventional_LMS}
In this section we describe the model for a multi-user DSL system with a conventional LMS crosstalk canceler. An LMS crosstalk canceler for the upstream transmission of $N$ users can be
easily implemented by $N$ parallel LMS blocks. To avoid confusion (for reasons that will become clearer in what follows) we denote the input signal (vector) to the LMS by ${\bf u}[n]$. In the conventual approach, the LMS crosstalk canceler is directly applied on the received signal, i.e., ${\bf u}[n]={\bf r}[n]$.
The LMS crosstalk canceler (in each tone) can be described as a matrix
\begin{equation}
\label{eq:LMS_matrix}
{\bf W}[n]\triangleq\left[{\bf w}_{1}[n],{\bf w}_{2}[n],\cdots,{\bf w}_{_{N}}[n]\right].
\end{equation}
Hence, the LMS recursion can be written in a matrix form as
\begin{equation}\label{eq:LMS_matrix_recursion}
{\bf W}[n+1]={\bf W}[n]-2\mu{\bf u}[n]{{\bf e}^H[n]}
\end{equation}
and the output of the LMS crosstalk canceler can be written as
${\bf x}[n]={\bf W}^H{\bf u}[n]$.

\section{The Deep-LMS algorithm}
\label{sec:new_algorithm}
In this section, we present the novel Deep-LMS crosstalk canceler. Unlike the conventional approach, in this algorithm the LMS ${\bf W}[n]$ is not applied directly on the received signal. Instead, the received signal is preprocessed by a matrix ${\bf W}_{\rm{P}}$.
Hence, the input of the adaptive block can be written as
\begin{equation}
{\bf u}[n]={\bf W}_{\rm{P}}^H[n]{\bf r}[n].
\end{equation}
The model is illustrated in Fig~.\ref{fig:two_steps_LMS}.

The algorithm is initialized with an identity preprocessing matrix and the preprocessing matrix is updated only at certain time instances. Denote the set of all update time instances by $\mathcal{U}$. The algorithm is summarized with the following set of equations:
\begin{align}
  &{\bf W}[0] = {\bf I} \\
  &{\bf W}_{\rm{P}}[0] = \tilde{\bf D}[0] \\
  &\label{eq:Deep_LMS_u}{\bf u}[n] = {\bf W}_{\rm{P}}^H[n]{\bf r}[n] \\
  &\label{eq:Deep_LMS_x}{\bf x}[n] = {\bf W}^H[n]{\bf u}[n] \\
  &{\bf e}[n] = {\bf d}[n]-{\bf x}[n] \\
\label{eq:LMS_update_U}
  &\breve{\bf W}[n+1]={\bf W}[n]+2\mu{\bf u}[n]{\bf e}^H[n]\\
  &{\bf W}[n+1] = \begin{cases}
                \breve{\bf W}[n+1] & n\notin\mathcal{U}\\
                {\bf I} & n\in\mathcal{U}
            \end{cases}\\
  &{\bf W}_{\rm{P}}[n+1] = \begin{cases}
                {\bf W}_{\rm{P}}[n] & n\notin\mathcal{U}\\
                \label{eq:the_prefilter}{\bf W}_{\rm{P}}[n]\breve{\bf W}[n+1]\tilde{\bf D}[n+1] & n\in\mathcal{U}
            \end{cases}
\end{align}
where $\tilde{\bf D}[n]$ is the diagonal normalization matrix that satisfies $({\bf H}{\bf W}_{\rm{P}}[n])_{i,i}=1$ for all $i$. As can be seen, for $n\notin\mathcal{U}$ the preprocessing matrix remains unchanged and the LMS block ${\bf W}[n]$ is updated exactly as in the conventional LMS crosstalk canceler. For $n\in\mathcal{U}$ the preprocessing matrix ${\bf W}_{\rm{P}}[n]$ is updated to include the combined effect of the preprocessing and the current LMS, and the LMS ${\bf W}[n]$ is initiated back to ${\bf I}$.

The role of the preprocessing matrix is to speed up the convergence of the LMS crosstalk canceler ${\bf W}[n]$ between the updates, i.e., $n\notin\mathcal{U}$ and hence speed up the convergence of the overall system.

\section{Analysis of the Deep-LMS algorithm}
\label{sec:analysis}
In this section, we analyze the convergence of the Deep-LMS algorithm through a characterization of the minimal signal to interference plus noise ratio (SINR) at the input and the output of the LMS crosstalk canceler between two consecutive updates of the preprocessing matrix. The main result of this paper (Theorem \ref{theorem:main_abstract}) states that if the update time is properly chosen, the rate of improvement in the SINR at the output of the Deep-LMS will be higher after the update of the preprocessing matrix.

Formally, let $\tilde{\bf H}[n]={\bf W}_p[n]{\bf H}[n]$
denote the effective channel matrix for the LMS crosstalk canceler; i.e., after the preprocessing. The SINR at the $i$-th input of the LMS crosstalk canceler; i.e., the SINR that is measured in $u_i[n]$, see Fig. \ref{fig:two_steps_LMS} is given by:
\begin{equation}
\label{eq:SINR}
\Phi_{i}[n]=\frac{|\tilde{h}_{i,i}[n]|^2}{\dsum_{j\neq i}{|\tilde{h}_{j,i}[n]|^2}+\tilde{\sigma}_i[n]},
\end{equation}
where $\tilde{\sigma}_i[n]$ is the variance of the $i$-th entry of the colored noise $\tilde{\bs \nu}[n]={\bf W}_{\rm{P}}^H[n]{\bs\nu}[n]$. The minimal input SINR to the LMS block is defined as $\Phi[n]=\dmin_{i}\Phi_i[n]$.

In the following analysis, we focus on the set of times between two updates of the preprocessing matrix; i.e., the set of times $\mathcal{F}_\ell\triangleq \{n_\ell,n_{\ell}+1,\ldots,n_{\ell+1}-1\}$, where $n_\ell\in\mathcal{U}$ and $n_{\ell+1} =\dmin_{n\in\mathcal{U}} \{n>n_\ell\}$, i.e., $n_{\ell+1}$ is the first update of the preprocessing matrix ${\bf W}_{\rm{P}}[n]$ after time $n_{\ell}$. Note that ${\bf W}_p[n]$ and $\tilde{\bf  H}[n]$ do not change during the analyzed interval, and hence $\Phi[k]=\Phi[m]$ for any $k,m\in\mathcal{F}_\ell$. Therefore, it is convenient to denote the minimal input SINR at these times by $\Phi_\ell$. The same argument applies to the covariance matrix
\begin{equation}
\label{eq:R_ell}
    {\bf R}_{\ell}=E\{{\bf u}[n]{\bf u}^H[n]\}.
\end{equation}
In addition, for the sake of readability, the time index $n$ is omitted in cases where it is clear from the context that $n\in\mathcal{F}_\ell$. Also note that at time $n_\ell\in\mathcal{U}$, the LMS crosstalk canceler is initiated and therefore ${\bf W}[n_\ell]={\bf I}$. Hence, at the update time instance, the input SINR to the LMS block is also the output SINR of the the Deep-LMS; i.e., the SINR that is measured in $x_i[n]$.
\begin{theorem}
\label{theorem:main_abstract}
Assuming that $\mu=\frac{1}{3\tr{{\bf R}_\ell}}$, the minimal SINR at time $n_{\ell+1}$ is lower bounded by
\begin{equation}
\label{eq:main}
\Phi_{\ell+1}\geq\left(\left(c\cdot a^{n_{\ell+1}-n_\ell}\Phi_{\ell}\right)^{-1}+\eta_{\infty}(\ell)\right)^{-1}-1,
\end{equation}
where\\
$c=\bigg(1+\delta\left(\Phi_{\ell}\right)\biggr)^{-1}$, $\delta\left(\Phi\right)=\frac{1+2\alpha(\Phi)}{\Phi-\alpha(\Phi)}$,\\
$a=\bigg(1-\frac{8}{9}g\big(\frac{1}{N}(1-\frac{\alpha(\Phi)+1}{\Phi+1})\bigr)\biggr)^{-1}$,\\ $\alpha(\Phi)=\left(N-1+\sqrt{N-1}\right)\sqrt{\Phi}+2(N-1)$, $g(x)=x-\frac{1}{2}x^2$ for complex LMS and $g(x)=x-x^2$ for LMS over the reals and $\eta_{\infty}(\ell)$ is the maximal MSE of the LMS ${\bf W}$ at steady state.
\end{theorem}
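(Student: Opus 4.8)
The plan is to split inequality (\ref{eq:main}) into three pieces: (i) a quantitative description of the covariance matrix ${\bf R}_\ell$ of the LMS input ${\bf u}[n]$, $n\in\mathcal{F}_\ell$, expressed purely through $\Phi_\ell$ and $N$; (ii) a geometric decay bound for the mean-square error of the conventional LMS block ${\bf W}[n]$ that is restarted from ${\bf I}$ at time $n_\ell$; and (iii) a conversion of the resulting MSE bound at $n_{\ell+1}$ into a lower bound on the minimal input SINR $\Phi_{\ell+1}$ of the next interval. The last step uses the fact, already noted above, that at the update $n_{\ell+1}$ the preprocessing matrix absorbs $\breve{\bf W}[n_{\ell+1}]\tilde{\bf D}[n_{\ell+1}]$, so that after the diagonal normalization $\tilde{\bf D}$ the new input SINR to the LMS block coincides with the output SINR of the old block just before its reset.

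\emph{Step (i).} Writing ${\bf u}[n]={\bf G}\,{\bf d}[n]+{\bf W}_{\rm{P}}^H[n_\ell]\,{\bs\nu}[n]$ with ${\bf G}=({\bf H}\,{\bf W}_{\rm{P}}[n_\ell])^H$, the normalization $({\bf H}\,{\bf W}_{\rm{P}})_{i,i}=1$ makes ${\bf G}$ unit-diagonal, so ${\bf R}_\ell={\bf G}\,{\bf G}^H+\sigma_\nu^2\,{\bf W}_{\rm{P}}^H[n_\ell]\,{\bf W}_{\rm{P}}[n_\ell]$ satisfies $({\bf R}_\ell)_{i,i}=1+1/\Phi_i$ and $\tr{{\bf R}_\ell}\le N(1+1/\Phi_\ell)$. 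The interference-plus-noise mass of row $i$ equals $1/\Phi_i\le 1/\Phi_\ell$; expanding the off-diagonal entries of ${\bf R}_\ell$ and bounding each summand by Cauchy--Schwarz yields $\dsum_{j\neq i}|({\bf R}_\ell)_{i,j}|\le\alpha(\Phi_\ell)/\Phi_\ell$, which is exactly the calibration of $\alpha(\Phi)=(N-1+\sqrt{N-1})\sqrt{\Phi}+2(N-1)$. A Gershgorin estimate then gives $\lambda_{\min}({\bf R}_\ell)\ge 1-\alpha(\Phi_\ell)/\Phi_\ell=(\Phi_\ell-\alpha(\Phi_\ell))/\Phi_\ell$, which is strictly positive exactly under the ``sufficiently high initial SINR'' hypothesis $\Phi_\ell>\alpha(\Phi_\ell)$, and hence $\lambda_{\min}({\bf R}_\ell)/\tr{{\bf R}_\ell}\ge\frac1N\bigl(1-\frac{\alpha(\Phi_\ell)+1}{\Phi_\ell+1}\bigr)$, the argument of $g$ in the definition of $a$.

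\emph{Step (ii).} Since $2\mu\lambda_{\max}({\bf R}_\ell)\le 2\mu\,\tr{{\bf R}_\ell}=\tfrac23<1$, the block is stable on $\mathcal{F}_\ell$. Because ${\bf d}[n]$ and ${\bs\nu}[n]$ are i.i.d.\ in $n$, the input ${\bf u}[n]$ is exactly independent of the weight iterate ${\bf W}[n]$, so the second-moment recursion for the weight error of each of the $N$ decoupled scalar filters holds without approximation; its excess MSE contracts per iteration by the worst eigenvalue of the associated learning-curve operator, and substituting $\mu=\tfrac1{3\tr{{\bf R}_\ell}}$ together with the conditioning bound of Step (i) shows this factor is at most $1-\tfrac89 g\bigl(\tfrac1N(1-\tfrac{\alpha(\Phi_\ell)+1}{\Phi_\ell+1})\bigr)=a^{-1}$, the two forms $g(x)=x-\tfrac12x^2$ and $g(x)=x-x^2$ accounting for the complex versus real fourth-moment contribution. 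As ${\bf W}[n_\ell]={\bf I}$, the block output at $n_\ell$ is its own input, so the per-user initial MSE is $E\{|d_i[n_\ell]-u_i[n_\ell]|^2\}=1/\Phi_i\le 1/\Phi_\ell$; unrolling the contraction over the $n_{\ell+1}-n_\ell$ iterations and adding the steady-state floor $\eta_\infty(\ell)$ (the maximal steady-state per-user MSE of the block on $\mathcal{F}_\ell$) gives, for every $i$, $E\{|e_i[n_{\ell+1}]|^2\}\le a^{-(n_{\ell+1}-n_\ell)}/\Phi_\ell+\eta_\infty(\ell)$.

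\emph{Step (iii), and the main obstacle.} Decompose the block output as $x_i[n_{\ell+1}]=\gamma_i\,d_i[n_{\ell+1}]+\zeta_i$ into its component aligned with $d_i$ and its interference-plus-noise component $\zeta_i$. After the $\tilde{\bf D}$ normalization the new SINR obeys $\frac{1}{1+\Phi_{\ell+1,i}}=\frac{E\{|\zeta_i|^2\}}{|\gamma_i|^2+E\{|\zeta_i|^2\}}$, and the Step (i) estimates keep the useful gain $\gamma_i$ within $\bigO{1/\sqrt{\Phi_\ell}}$ of $1$, so an elementary inequality gives $\frac{1}{1+\Phi_{\ell+1,i}}\le\bigl(1+\delta(\Phi_\ell)\bigr)E\{|e_i[n_{\ell+1}]|^2\}$, where the multiplicative loss $1+\delta(\Phi_\ell)=c^{-1}$, with $\delta(\Phi)=\frac{1+2\alpha(\Phi)}{\Phi-\alpha(\Phi)}$, is the price of the iterate being a biased, not-yet-converged estimator rather than the exact MMSE solution, and vanishes as $\Phi_\ell\to\infty$. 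Combining with Step (ii) and taking $\dmin_i$ produces $\Phi_{\ell+1}\ge\bigl((c\,a^{\,n_{\ell+1}-n_\ell}\Phi_\ell)^{-1}+\eta_\infty(\ell)\bigr)^{-1}-1$, which is (\ref{eq:main}). The hard part will be the Step (ii)--(iii) interface: $\Phi_{\ell+1}$ is a ratio of second moments of the \emph{random} weight iterate rather than of its mean, so one must carry the full coupled second-moment recursion across the interval, verify that the per-iteration contraction is \emph{sharply} $a^{-1}$ uniformly over the entire eigenvalue spread permitted by Step (i) and not merely bounded by some looser expression, correctly track the cross-user correlations entering $\eta_\infty(\ell)$, and keep the additive $1$'s in $\delta(\Phi)$ and in $1-\frac{\alpha(\Phi)+1}{\Phi+1}$ exact rather than folding them into $\bigO{\cdot}$ terms --- with $\Phi_\ell>\alpha(\Phi_\ell)$ precisely the condition that keeps the relevant denominators positive.
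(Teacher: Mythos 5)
Your architecture is broadly parallel to the paper's, and your Step (i) is essentially its Lemma \ref{lemma:bound_condition_norm1F_MSE_minimal_eig_to_traceR}: the Gershgorin estimate with the off-diagonal mass $B_\ell\le\alpha(\Phi_\ell)/\Phi_\ell$, $\lambda_{\min}\ge 1-B_\ell$, $\tr{{\bf R}_\ell}\le N(1+1/\Phi_\ell)$, and the resulting bound on $\lambda_{\min}/\tr{{\bf R}_\ell}$ all match. The gap is in where the factor $c^{-1}=1+\delta(\Phi_\ell)$ comes from. In Step (ii) you assert that the output MSE contracts per iteration by exactly $a^{-1}$ with no prefactor, attributing this to ``decoupled scalar filters.'' The per-user filters are indeed decoupled, but within each filter the $N$ eigenmodes of ${\bf R}_\ell$ are coupled through the rank-one term in the second-moment propagation matrix ${\bf F}_\ell=\diag({\bs\rho}_\ell)+4\mu^2{\bs\lambda}_\ell{\bs\lambda}_\ell^T$, and the output MSE is the weighted functional ${\bs 1}^T{\bs\Lambda}_\ell{\bf S}[n]$. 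Its decay is governed by $\|{\bs\Lambda}_\ell{\bf F}_\ell^k{\bs\Lambda}_\ell^{-1}\|_1$, which is only bounded by $\frac{\lambda_{\max}}{\lambda_{\min}}\left(\|{\bf F}_\ell\|_1\right)^k$; the condition-number prefactor $\frac{\lambda_{\max}}{\lambda_{\min}}\le 1+\delta(\Phi_\ell)$ (which is exactly the Gershgorin ratio $\frac{1+1/\Phi+B_\ell}{1-B_\ell}$ you already have in hand from Step (i)) therefore belongs in Step (ii), and your claim of a ``sharply $a^{-1}$'' contraction from the first iteration is unproven --- indeed it is the whole technical content of the paper's Lemmas \ref{lemma:the_need_for_l1_norm_of_F} and \ref{lemma:upper_bound_on_l1_norm_of_F}.

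Having omitted the factor there, you reintroduce $1+\delta(\Phi_\ell)$ in Step (iii) as a multiplicative loss in the MSE-to-SINR conversion, with no derivation of why the ``bias of the iterate'' should equal that specific expression. The correct conversion is lossless: writing the effective channel after ${\bf W}_{\rm P}{\bf W}$ as $\check{\bf H}$ and minimizing $\omega^2/(\epsilon_i-(1-\omega)^2)$ over the possible diagonal gains gives $\Psi_i\ge\epsilon_i^{-1}-1$, i.e.\ $1/(1+\Psi_i)\le\epsilon_i$, valid whenever $\epsilon_i<1$ (the complementary case makes the claimed bound trivial). Moreover, even granting your version of the conversion, it yields $\Phi_{\ell+1}\ge\left((c\,a^{n_{\ell+1}-n_\ell}\Phi_\ell)^{-1}+c^{-1}\eta_\infty(\ell)\right)^{-1}-1$, which multiplies the steady-state floor by $c^{-1}\ge 1$ and is therefore strictly weaker than, and does not establish, the stated inequality (\ref{eq:main}). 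Repositioning the condition-number factor into the MSE recursion and using the lossless SINR--MSE relation closes both problems simultaneously.
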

Note that $\mu=\frac{1}{3\tr{{\bf R}_\ell}}$ is the maximal update constant suggested in \cite{Feuer_Convergence_1985}. This value is a popular choice as it guarantees the LMS convergence.

Theorem \ref{theorem:main_abstract} has several important implications. Note that $\eta_{\infty}(\ell)$ is in fact the excess MSE which can be tuned by decreasing the step size $\mu$ at the price of a longer convergence time. Moreover, $\eta_{\infty}(\ell)$ approaches zero when the variance of the noise approaches zero. This fact is critical in DSL systems, where the crosstalk interference is much stronger than the background noise. Thus, $\eta_{\infty}(\ell)$ is typically negligible. The Theorem \ref{theorem:main_abstract} implications can be summarized (for high SINR) by:
\begin{equation*}
  \Phi_{\ell+1}\gtrsim c\cdot a^{n_{\ell+1}-n_{\ell}}\Phi_{\ell}.
\end{equation*}
Furthermore, if $\Phi_\ell>1.5N^2+3N$, both  $c$ and $a>1$ are monotonically increasing functions of $\Phi_\ell$.
Hence, updating the preprocessing matrix ${\bf W}_{\rm{P}}$ at (sufficiently large) time $n_{\ell+1}$; i.e., when $c\cdot a^{n_{\ell+1}-n_{\ell}}$ is large enough, will improve the input SINR of the LMS component at times $n>n_{\ell+1}$. Once the preprocessing matrix is updated, the (new) minimal SINR at the input of the LMS is improved accordingly, leading to an increase in $c$ and $a$, and hence to a faster convergence rate of the LMS. Hence, each update of the preprocessing matrix speeds up the convergence of the entire system.
\section{Proof of Theorem \ref{theorem:main_abstract}}
\label{sec:proof}
Due to the length of the proof, we present the proof through a sequence of lemmas, but the proofs of most the lemmas are given in the appendix. Also, for the sake of readability, we begin by a sketch of the proof: we focus on the set of times between two updates of the preprocessing matrix; i.e., the set of times $\mathcal{F}_\ell\cup\{n_{\ell+1}\}$. First, we follow the steps in \cite{Horowitz_Performance_1981,Feuer_Convergence_1985} and characterize the propagation of the MSE at the output of the crosstalk canceler using a propagation matrix ${\bf F}_{\ell}$. Then, Lemma \ref{lemma:the_need_for_l1_norm_of_F} upper bounds the maximal MSE at the output of the Deep-LMS for $n\in\mathcal{F}_{\ell}$ in terms of the condition number of ${\bf R}_\ell$, the norm of the propagation matrix ${\bf F}_{\ell}$ and the maximal MSE at the output of the Deep-LMS at time $n_\ell$. Lemma \ref{lemma:bound_condition_norm1F_MSE_minimal_eig_to_traceR} and Lemma \ref{lemma:upper_bound_on_l1_norm_of_F} use a novel bounding analysis to upper bound each of these terms using the knowledge of the minimal input SINR. Then, we argue that the operation of the Deep-LMS at the update time instance $n_{\ell+1}$ is composed of an additional step of the LMS followed by the normalization $\tilde{\bf D}$. Since a normalization by any diagonal matrix has no effect on the SINR, we conclude the proof by using Lemma \ref{lemma:SINR_MSE_relation} that characterizes the relationship between the input/output SINR and the maximal MSE at the output of the LMS block.

Note that if $\left(c\cdot a^{n_{\ell+1}-n_\ell}\Phi_{\ell}\right)^{-1}+\eta_{\infty}(\ell)\geq 1$ the bound in (\ref{eq:main}) is negative (or zero) and hence trivial. Furthermore, this case is not relevant here because it addresses the case of very low SNR in which the FEXT canceler has no advantages. Thus, in the following we prove Theorem 1 for $\left(c\cdot a^{n_{\ell+1}-n_\ell}\Phi_{\ell}\right)^{-1}+\eta_{\infty}(\ell)< 1$.

For the first lemma, we follow Horowitz et al. \cite{Horowitz_Performance_1981} and Feuer et al. \cite{Feuer_Convergence_1985},\footnote{It should be mentioned that the convergence analysis of the LMS in \cite{Horowitz_Performance_1981,Feuer_Convergence_1985} was based on the assumption that various data vectors were mutually independent. In general uses of LMS this assumption is not true (e.g. in transversal filters). However, in the FEXT cancellation case, this independence assumption is accurate.} and write the propagation of the MSE of the crosstalk canceler coefficients in a recursive form. To that end, certain notations are needed: let ${\bf U}_\ell^H{\bs\Lambda}_\ell{\bf U}_\ell$ be the eigenvalue decomposition of ${\bf R}_{\ell}$, the correlation matrix of the signal after preprocessing (see \ref{eq:R_ell}). Let
\begin{equation}\label{eq:V}
{\bf V}[n]={\bf U}_\ell\left({\bf W}[n]-{\bf W}_{\ell*}\right),
\end{equation}
where
\begin{equation}\label{eq:W*}
{\bf W}_{\ell*}={\bf R}_\ell^{-1}{\bf R}_{ud}
\end{equation}
is the optimal estimator of ${\bf d}$ from ${\bf u}$ and ${\bf R}_{ud}=E\{{\bf u}{\bf d}^H\}$. Denote by ${\bf v}_i[n]$ and $v_{i,j}[n]$ the $i$-th column of matrix ${\bf V}[n]$, and the $j$-element in this vector, respectively. We also define matrix ${\bf S}[n]$, in which the $i,j$ element is
\begin{equation}\label{eq:s_ij}
s_{i,j}[n]=E\{|v_{i,j}[n]|^2\}.
\end{equation}
Let ${\bs\epsilon}_{*}$ be the minimum attainable MSE vector; i.e., the vector that minimizes the cost function $J[n]=\|{\bs \epsilon}[n]\|_{\ell_2}^2$ by choosing ${\bf W}[n]={\bf W}_{\ell*}$. Note that
\begin{equation}\label{eq:epsilon_star_i}
\epsilon_{*i}:=E\{{\bf d}_i\}-{\bf r}_{{\bf u}d_i}^H{\bf R}_\ell^{-1}{\bf r}_{{\bf u}d_i}
\end{equation}
is not affected by the preprocessing matrix and hence does not depend on $n$, where ${\bf r}_{{\bf u}d_i}=E\{{\bf u}d_i^*\}$ and $d_i^*$ is the conjugate of $d_i$.
Denote the eigenvalues of ${\bf R}_\ell$ by
\begin{equation*}
0<\lambda_{\ell,1}\leq\lambda_{\ell,2}\leq\cdots\leq\lambda_{\ell_N}.
\end{equation*}
\begin{lemma}
\label{lemma:MSE_propagation}
The propagation of the MSE of the coefficients of ${\bf W}$ for $n\in\mathcal{F}_{\ell}$ can be written as
\begin{equation}
\label{eq:recursion_on_S}
{\bf S}[n+1]={\bf F}_\ell{\bf S}[n]+4\mu^2{\bs\lambda}_\ell{\bs\epsilon}^T_{*}
\end{equation}
where
\begin{equation}
\label{eq:F}
{\bf F}_\ell=\diag{({\bs \rho}_\ell)}+4\mu^2{\bs\lambda}_\ell{\bs\lambda}_\ell^T,
\end{equation}
${\bs \rho}_\ell:=\left[\rho_{\ell ,1},\rho_{\ell ,2},\cdots,\rho_{\ell,N}\right]$, ${\bs\lambda}_\ell:=\left[\lambda_{\ell ,1},\lambda_{\ell , 2},\cdots,\lambda_{\ell,N}\right]$, $\rho_{\ell ,i}=1-4\mu_\ell\lambda_{\ell , i}+8q\mu_\ell^2\lambda_{\ell ,i}^2$ such that $q=\frac{1}{2}$ for complex LMS and $q=1$ for LMS over the reals. Furthermore, the MSE vector at the output of the Deep-LMS for $n\in\mathcal{F}_{\ell}$ is given by
\begin{equation}
\label{eq:MSE_vector}
{\bs\epsilon}^T[n]={\bs 1}_N^T\tilde{\bf S}[n]+{\bs\epsilon}_{*}^T,
\end{equation}
where $\tilde{\bf S}[n]:={\bs \Lambda}_\ell{\bf S}[n]$.
\end{lemma}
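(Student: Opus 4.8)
The plan is to derive the recursion by direct substitution, following the template of Horowitz et al. \cite{Horowitz_Performance_1981} and Feuer et al. \cite{Feuer_Convergence_1985}, but carried through in matrix form and exploiting the independence assumption justified in the footnote. First I would write the LMS update (\ref{eq:LMS_update_U}) in terms of the shifted, rotated error coordinates $\vV[n]=\mU_\ell(\mW[n]-\mW_{\ell*})$. Using $\ve[n]=\vd[n]-\mW^H[n]\vu[n]$ and the normal equations $\mR_\ell\mW_{\ell*}=\mR_{ud}$, the ``driving'' part of the error splits as $\ve[n]=\gre_*[n]-(\mW[n]-\mW_{\ell*})^H\vu[n]$, where $\gre_*[n]=\vd[n]-\mW_{\ell*}^H\vu[n]$ is the irreducible error, which by the orthogonality principle is uncorrelated with $\vu[n]$ and has per-component variance $\epsilon_{*i}$ as in (\ref{eq:epsilon_star_i}). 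Substituting into (\ref{eq:LMS_update_U}) and multiplying by $\mU_\ell$ gives a recursion of the form $\vV[n+1]=\vV[n]-2\gm\,\vx_\ell[n]\bigl(\vx_\ell^H[n]\vV[n]-\gre_*^H[n]\bigr)$, where $\vx_\ell[n]=\mU_\ell\vu[n]$ has diagonal covariance $\mgL_\ell$.

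Next I would form the second-moment matrix $\mS[n]$ with entries $s_{i,j}[n]=E\{|v_{i,j}[n]|^2\}$ by taking the squared magnitude of the $(i,j)$ entry of the recursion and then the expectation. Here the independence assumption is used twice: $\vu[n]$ (hence $\vx_\ell[n]$) is independent of $\mW[n]$ (hence of $\vV[n]$), and $\gre_*[n]$ is orthogonal to $\vu[n]$. The cross term between the ``old'' $\vV[n]$ and the noise part $\gre_*$ vanishes in expectation, and the bookkeeping reduces to computing expectations of products of four Gaussian entries of $\vx_\ell[n]$. This is where the constant $q$ appears: for real Gaussians $E\{x^4\}=3\sigma^4$ while $E\{x^2y^2\}=\sigma^2\tau^2$, giving the factor that becomes $q=1$; for circularly-symmetric complex Gaussians $E\{|x|^4\}=2\sigma^4$, giving $q=\tfrac12$. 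Collecting the diagonal-feedback term $\rho_{\ell,i}=1-4\gm\gl_{\ell,i}+8q\gm^2\gl_{\ell,i}^2$, the rank-one coupling $4\gm^2\vgl_\ell\vgl_\ell^T$ (from the term where both copies of $\vx_\ell$ contribute an off-diagonal-index factor), and the forcing term $4\gm^2\vgl_\ell\gre_*^T$ (from the noise square), yields (\ref{eq:recursion_on_S})–(\ref{eq:F}). Finally, (\ref{eq:MSE_vector}) follows from expanding the instantaneous MSE $\gre_i[n]=E\{|e_i[n]|^2\}$, again using orthogonality of $\gre_{*}[n]$ with $\vu[n]$, so that $\gre_i[n]=\epsilon_{*i}+E\{\,\|\mgL_\ell^{1/2}\vv_i[n]\|^2\}=\epsilon_{*i}+\sum_j\gl_{\ell,j}s_{j,i}[n]$, i.e. $\vgre^T[n]=\vone_N^T\mgL_\ell\mS[n]+\vgre_*^T$, which is the claimed identity with $\tilde\mS[n]=\mgL_\ell\mS[n]$.

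The routine part is the fourth-moment Gaussian expectation; the one place to be careful is getting the real-versus-complex constant $q$ exactly right and making sure the rank-one term in $\mF_\ell$ carries the correct $4\gm^2$ (not $8q\gm^2$) coefficient — this is a genuine asymmetry between the diagonal and off-diagonal contributions and is the only subtle bookkeeping step. Everything else is substitution and linearity of expectation. I would therefore expect the main obstacle to be purely notational: keeping the two index roles (the column index $i$ of $\mW$, i.e. which user, versus the coordinate index $j$ of the rotated error) straight while carrying out the moment computation, and confirming that the independence-assumption footnote really does license dropping all the cross terms that would otherwise obstruct a closed recursion.
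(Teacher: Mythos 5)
Your proposal is correct and follows essentially the same route as the paper: rotate into the coordinates $\mathbf{V}[n]=\mathbf{U}_\ell(\mathbf{W}[n]-\mathbf{W}_{\ell*})$, use the independence assumption together with the orthogonality of $\mathbf{e}_*[n]$ and $\mathbf{u}[n]$ to kill the cross terms, evaluate the Gaussian fourth moments (which is exactly where the real-versus-complex constant $q$ arises, via $E\{\mathbf{z}\mathbf{z}^H\mathbf{A}\mathbf{z}\mathbf{z}^H\}=\tr{\boldsymbol\Sigma\mathbf{A}}\boldsymbol\Sigma+\boldsymbol\Sigma\mathbf{A}\boldsymbol\Sigma$ in the circular case), and recover the output-MSE identity from $\epsilon_i[n]-\epsilon_{*i}=\tr{\boldsymbol\Lambda_\ell\mathbf{C}^{(i)}[n]}$. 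You also correctly flag the one genuinely delicate bookkeeping point, namely that the rank-one coupling carries $4\mu^2$ while only the diagonal term carries $8q\mu^2$.
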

\begin{proof}
The proof for the real case is given in \cite{Feuer_Convergence_1985}. An explicit
computation for the complex case is given in the Appendix.
\end{proof}

As can be seen from (\ref{eq:recursion_on_S}), the eigenvalues of ${\bf F}_\ell$ determine the rate of convergence of ${\bf S}[n]$ (only between the updates of the preprocessing matrix). In \cite{Feuer_Convergence_1985}, it was pointed out that for a real LMS, if $\mu\leq\frac{1}{3\tr{{\bf R}_\ell}}$ all eigenvalues of ${\bf F}$ are less than $1$ and (\ref{eq:recursion_on_S}) converges. However, the analysis in \cite{Feuer_Convergence_1985} is not sufficient to determine the rate of convergence.

From the recursion (\ref{eq:recursion_on_S}) it is obvious that
\begin{equation}
\label{eq:recursion_on_LambdaS}
{\bs\Lambda}_\ell{\bf S}[n+1]={\bs\Lambda}_\ell{\bf F}_\ell{\bs\Lambda}_\ell^{-1}{\bs\Lambda}_\ell{\bf S}[n]+4\mu_\ell^2{\bs\lambda}_\ell^2{\bs\epsilon}_{*}^T,
\end{equation}
where for any vector ${\bf v}$, we use the notation $({\bf v}^2)_j=v_j^2$. Denoting $\tilde{\bf F}_\ell\triangleq{\bs\Lambda}_\ell{\bf F}_\ell{\bs\Lambda}_\ell^{-1}$, (\ref{eq:recursion_on_LambdaS}) can be rewritten as
\begin{equation}
\label{eq:recursion_on_tildeS}
\tilde{\bf S}[n+1]=\tilde{\bf F}_\ell\tilde{\bf S}[n]+4\mu_\ell^2{\bs\lambda}_\ell^2{\bs\epsilon}_{*}^T.
\end{equation}

Note that $\tilde{\bf F}_\ell$ and ${\bf F}_\ell$ are similar matrices and therefore they share their eigenvalues. Combining (\ref{eq:MSE_vector}) and (\ref{eq:recursion_on_tildeS}) yields that for any $n\in\mathcal{F}_\ell$:
\begin{align}
\label{eq:MSE_vector_n_0}
{\bs\epsilon}^T[n]={\bs 1}_N^T\tilde{\bf F}_\ell^{n-n_\ell}\tilde{\bf S}[n_\ell]+4\mu_\ell^2{\bs 1}_N^T\dsum_{k=0}^{n-n_\ell-1}{\tilde{\bf F}_\ell^k}{\bs\lambda}_\ell^2{\bs\epsilon}_{*}^T+{\bs\epsilon}_{*}^T.
\end{align}

In Lemma \ref{lemma:the_need_for_l1_norm_of_F}, we upper bound $\|{\bs\epsilon}[n]\|_\infty$ for $n\in\mathcal{F}_\ell$, where $\|\cdot\|_\infty$ indicates the $\ell_\infty$ norm.
\begin{lemma}
\label{lemma:the_need_for_l1_norm_of_F}
\label{lemma:excess_MSE_bound}
\label{lemma:MSE_propagation_bound}
For $n\in\mathcal{F}_\ell$, if all eigenvalues of ${\bf F}_\ell$ are smaller than $1$, the maximal MSE at the output of the Deep-LMS is given by
\begin{equation}
\label{eq:Maximal_Deep_LMS}
\|{\bs\epsilon}[n]\|_{\infty}\leq \frac{\lambda_{\max}\left({\bf R}_{\ell}\right)}{\lambda_{\min}({\bf R}_{\ell})}\left(\|{\bf F}_\ell\|_1\right)^{n-n_\ell}\|{\bs\epsilon}[n_\ell]\|_{\infty}+\eta_{\infty}(\ell),
\end{equation}
where $\|\cdot\|_1$ indicates the induced $\ell_1$ norm. Furthermore,
\begin{equation*}
\eta_{\infty}(\ell)\leq \left(1+4\mu_\ell^2{\bs\lambda}_\ell^T\left({\bf I}-{\bf F}_\ell\right)^{-1}{\bs\lambda}_\ell\right)\|{\bs\epsilon}_*\|_\infty.
\end{equation*}
\end{lemma}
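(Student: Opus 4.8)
The plan is to estimate $\|{\bs\epsilon}[n]\|_{\infty}$ directly from the closed form (\ref{eq:MSE_vector_n_0}), exploiting the fact that every matrix and vector occurring there is entrywise non-negative. First, since $s_{i,j}[n]=E\{|v_{i,j}[n]|^2\}\ge0$ and ${\bs\Lambda}_\ell\succ0$, the matrix $\tilde{\bf S}[n]={\bs\Lambda}_\ell{\bf S}[n]$ is entrywise non-negative, and each $\epsilon_{*i}$ is a residual MMSE and hence $\ge0$; therefore (\ref{eq:MSE_vector}) gives $0\le{\bs 1}_N^T\tilde{\bf S}[n]\le{\bs\epsilon}^T[n]$ entrywise, so $\|{\bs\epsilon}[n]\|_{\infty}\le\|{\bs 1}_N^T\tilde{\bf S}[n]\|_{\infty}+\|{\bs\epsilon}_{*}\|_{\infty}$. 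It thus suffices to bound the two summands that (\ref{eq:MSE_vector_n_0}) provides for ${\bs 1}_N^T\tilde{\bf S}[n]$: the transient term ${\bs 1}_N^T\tilde{\bf F}_\ell^{\,n-n_\ell}\tilde{\bf S}[n_\ell]$ and the forced term $4\mu_\ell^2{\bs 1}_N^T\big(\dsum_{k=0}^{n-n_\ell-1}\tilde{\bf F}_\ell^{\,k}{\bs\lambda}_\ell^2\big){\bs\epsilon}_{*}^T$.

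The structural ingredient that makes the estimate clean is that ${\bf F}_\ell$ is entrywise non-negative. Indeed $4\mu^2{\bs\lambda}_\ell{\bs\lambda}_\ell^T\ge0$, while $\rho_{\ell,i}=1-4\mu_\ell\lambda_{\ell,i}+8q\mu_\ell^2\lambda_{\ell,i}^2$ is non-negative for every $\mu_\ell$: for $q=\tfrac12$ it equals $(1-2\mu_\ell\lambda_{\ell,i})^2$, and for $q=1$ the quadratic in $\mu_\ell\lambda_{\ell,i}$ has a negative discriminant. Hence $\tilde{\bf F}_\ell={\bs\Lambda}_\ell{\bf F}_\ell{\bs\Lambda}_\ell^{-1}$, all powers ${\bf F}_\ell^{\,k}$ and $\tilde{\bf F}_\ell^{\,k}$, and — using the hypothesis that the spectral radius of ${\bf F}_\ell$ is below $1$ — the Neumann sum $({\bf I}-{\bf F}_\ell)^{-1}=\dsum_{k\ge0}{\bf F}_\ell^{\,k}$ are all entrywise non-negative; in particular every partial sum of such a series is dominated entrywise by the full sum, which is what will let me replace a finite sum by an infinite one.

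For the transient term I would use $\tilde{\bf F}_\ell^{\,p}={\bs\Lambda}_\ell{\bf F}_\ell^{\,p}{\bs\Lambda}_\ell^{-1}$ together with $\tilde{\bf S}[n_\ell]={\bs\Lambda}_\ell{\bf S}[n_\ell]$ to rewrite its $i$-th entry, with $p=n-n_\ell$, as ${\bs 1}_N^T\tilde{\bf F}_\ell^{\,p}\tilde{\bf s}_i[n_\ell]={\bs\lambda}_\ell^T{\bf F}_\ell^{\,p}{\bf s}_i[n_\ell]$. Hölder's inequality, submultiplicativity of the induced $\ell_1$ norm, and $\|{\bs\lambda}_\ell\|_{\infty}=\lambda_{\max}({\bf R}_\ell)$ then give ${\bs\lambda}_\ell^T{\bf F}_\ell^{\,p}{\bf s}_i[n_\ell]\le\lambda_{\max}({\bf R}_\ell)\,\|{\bf F}_\ell\|_1^{\,p}\,\|{\bf s}_i[n_\ell]\|_1$; and since ${\bf S}[n_\ell]\ge0$, $\|{\bf s}_i[n_\ell]\|_1=\dsum_j\lambda_{\ell,j}^{-1}\tilde{s}_{j,i}[n_\ell]\le\lambda_{\min}({\bf R}_\ell)^{-1}({\bs 1}_N^T\tilde{\bf S}[n_\ell])_i\le\lambda_{\min}({\bf R}_\ell)^{-1}\|{\bs\epsilon}[n_\ell]\|_{\infty}$, producing exactly the factor $\frac{\lambda_{\max}({\bf R}_\ell)}{\lambda_{\min}({\bf R}_\ell)}\|{\bf F}_\ell\|_1^{\,n-n_\ell}$. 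For the forced term, the analogous identity ${\bs 1}_N^T\tilde{\bf F}_\ell^{\,k}{\bs\lambda}_\ell^2={\bs\lambda}_\ell^T{\bf F}_\ell^{\,k}{\bs\Lambda}_\ell^{-1}{\bs\lambda}_\ell^2={\bs\lambda}_\ell^T{\bf F}_\ell^{\,k}{\bs\lambda}_\ell$ turns the $i$-th entry into $4\mu_\ell^2\big(\dsum_{k=0}^{n-n_\ell-1}{\bs\lambda}_\ell^T{\bf F}_\ell^{\,k}{\bs\lambda}_\ell\big)\epsilon_{*i}$, and extending the sum to $k=\infty$ (legitimate by the entrywise domination established above) gives the bound $4\mu_\ell^2{\bs\lambda}_\ell^T({\bf I}-{\bf F}_\ell)^{-1}{\bs\lambda}_\ell\,\epsilon_{*i}$. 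Taking the maximum over $i$ and adding back $\|{\bs\epsilon}_{*}\|_{\infty}$ yields (\ref{eq:Maximal_Deep_LMS}) with $\eta_{\infty}(\ell)=\big(1+4\mu_\ell^2{\bs\lambda}_\ell^T({\bf I}-{\bf F}_\ell)^{-1}{\bs\lambda}_\ell\big)\|{\bs\epsilon}_{*}\|_{\infty}$.

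The main obstacle is not any heavy calculation but controlling the placement of the spectral factors so that only $\lambda_{\max}({\bf R}_\ell)/\lambda_{\min}({\bf R}_\ell)$ — and not a higher power of the condition number — survives: this forces one to pass between ${\bf S}$ and $\tilde{\bf S}={\bs\Lambda}_\ell{\bf S}$ at precisely the right moments (bounding ${\bf s}_i[n_\ell]$ by $\tilde{\bf s}_i[n_\ell]$ through $\lambda_{\min}$ but pairing ${\bf F}_\ell^{\,p}{\bf s}_i[n_\ell]$ against ${\bs\lambda}_\ell$ through $\lambda_{\max}$), and to route the $\ell_{\infty}$-to-$\ell_1$ step so that the operator norm left behind is exactly $\|{\bf F}_\ell\|_1$. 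A secondary point requiring care is establishing the non-negativity of each $\rho_{\ell,i}$ — and hence of the whole propagation operator and its Neumann inverse — independently of whether $\mu_\ell\le\frac{1}{3\tr{{\bf R}_\ell}}$, so that the lemma rests only on its stated hypothesis that all eigenvalues of ${\bf F}_\ell$ are below $1$.
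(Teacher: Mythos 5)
Your proposal is correct and follows essentially the same route as the paper's proof: start from (\ref{eq:MSE_vector_n_0}), exploit entrywise non-negativity of $\tilde{\bf F}_\ell$ and $\tilde{\bf S}$, extract the factor $\lambda_{\max}({\bf R}_\ell)/\lambda_{\min}({\bf R}_\ell)\left(\|{\bf F}_\ell\|_1\right)^{n-n_\ell}$ from the transient term, and bound the forced term by extending the non-negative sum $\sum_k{\bs\lambda}_\ell^T{\bf F}_\ell^k{\bs\lambda}_\ell$ to the Neumann series. The only differences are cosmetic (you apply H\"older to the bilinear form ${\bs\lambda}_\ell^T{\bf F}_\ell^p{\bf s}_i$ where the paper bounds $\|\tilde{\bf F}_\ell^p\|_1$ via the similarity $\tilde{\bf F}_\ell={\bs\Lambda}_\ell{\bf F}_\ell{\bs\Lambda}_\ell^{-1}$), and your explicit verification that each $\rho_{\ell,i}\geq 0$ is a detail the paper leaves implicit.
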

\begin{proof}
The proof will be given in Appendix \ref{appendix:proof_of_lemmas}.
\end{proof}
Lemma \ref{lemma:bound_condition_norm1F_MSE_minimal_eig_to_traceR} uses Gershgorin circle theorem to develop bounds that involves the eigenvalues of ${\bf R}_\ell$
\begin{lemma}
\label{lemma:bound_condition_norm1F_MSE_minimal_eig_to_traceR}
\begin{enumerate}[(a)]
  \item
\label{item:condition_number_R_upper_bound_by_SINR}
The ($\ell_2$-norm based) condition number of ${\bf R}_\ell$ is upper bounded by:
\begin{equation}
\frac{\lambda_{\max}\left({\bf R}_{\ell}\right)}{\lambda_{\min}({\bf R}_{\ell})}\leq 1+\delta\left(\Phi_\ell\right),
\end{equation}
where $\delta\left(\Phi\right)$ is defined in Theorem \ref{theorem:main_abstract}.
  \item
\label{item:minimal_eigenvalue_to_traceR_lower_bound_by_SINR}
The ratio between the minimal eigenvalue of ${\bf R}_\ell$ to the trace of ${\bf R}_\ell$ is lower bounded by:
\begin{equation}
  \frac{\lambda_{\min}\left({\bf R}_\ell\right)}{\tr{{\bf R}_\ell}}\geq
\frac{1}{N}\bigg(1-\frac{\alpha\left(\Phi_\ell\right)+1}{\Phi_\ell+1}\biggr),
\end{equation}
where $\alpha\left(\Phi\right)$ is defined in Theorem \ref{theorem:main_abstract}.
\end{enumerate}
\end{lemma}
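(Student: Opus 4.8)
The plan is to apply the Gershgorin circle theorem to the Hermitian matrix ${\bf R}_\ell$ defined in (\ref{eq:R_ell}): everything the theorem needs — the diagonal entries and a bound on the off-diagonal absolute row sums — is controlled by the minimal input SINR $\Phi_\ell$. First I would make the structure of ${\bf R}_\ell$ explicit. From (\ref{eq:r}), (\ref{eq:Deep_LMS_u}) and the whiteness and mutual independence of ${\bf d}$ and ${\bs\nu}$, ${\bf R}_\ell=\tilde{\bf H}^H\tilde{\bf H}+\sigma_\nu^2{\bf W}_{\rm{P}}^H{\bf W}_{\rm{P}}$; equivalently $({\bf R}_\ell)_{i,k}$ is the inner product of the $i$-th and $k$-th columns of the $2N\times N$ matrix obtained by stacking $\tilde{\bf H}$ on top of $\sigma_\nu{\bf W}_{\rm{P}}$, and $\tilde\sigma_i=\sigma_\nu^2({\bf W}_{\rm{P}}^H{\bf W}_{\rm{P}})_{i,i}$ is the noise variance appearing in (\ref{eq:SINR}). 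The point that drives the whole estimate is that the normalization $\tilde{\bf D}$ forces $\tilde h_{i,i}=1$ on $\mathcal{F}_\ell$, so by the definition (\ref{eq:SINR}) of $\Phi_i$ the diagonal is exactly $({\bf R}_\ell)_{i,i}=\dsum_j|\tilde h_{j,i}|^2+\tilde\sigma_i=1+\bigl(\dsum_{j\ne i}|\tilde h_{j,i}|^2+\tilde\sigma_i\bigr)=1+\Phi_i^{-1}$. Hence $1\le({\bf R}_\ell)_{i,i}\le 1+\Phi_\ell^{-1}$ and $N\le\tr{{\bf R}_\ell}\le N(1+\Phi_\ell^{-1})$.

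The heart of the estimate is a bound on the $i$-th Gershgorin radius $r_i=\dsum_{k\ne i}|({\bf R}_\ell)_{i,k}|$. For $i\ne k$ I would write $({\bf R}_\ell)_{i,k}=\tilde h_{i,k}+\overline{\tilde h_{k,i}}+\dsum_{j\ne i,k}\overline{\tilde h_{j,i}}\tilde h_{j,k}+\sigma_\nu^2({\bf W}_{\rm{P}}^H{\bf W}_{\rm{P}})_{i,k}$, peeling off the two summands $j\in\{i,k\}$ (where $\tilde h_{i,i}=\tilde h_{k,k}=1$) from the residual interference cross-term and the noise cross-term. Each piece is then bounded using only $\Phi_i\ge\Phi_\ell$, equivalently $\dsum_{j\ne i}|\tilde h_{j,i}|^2+\tilde\sigma_i\le\Phi_\ell^{-1}$ (so also $|\tilde h_{j,i}|\le\Phi_\ell^{-1/2}$ for every $j\ne i$ and $\tilde\sigma_i\le\Phi_\ell^{-1}$): the $\tilde h_{i,k}$ lie along a single row but in distinct columns, so I can only bound each one separately, giving $\dsum_{k\ne i}|\tilde h_{i,k}|\le(N-1)\Phi_\ell^{-1/2}$; the $\tilde h_{k,i}$ all lie in column $i$, so Cauchy--Schwarz gives $\dsum_{k\ne i}|\tilde h_{k,i}|\le\sqrt{N-1}\,\bigl(\dsum_{k\ne i}|\tilde h_{k,i}|^2\bigr)^{1/2}\le\sqrt{N-1}\,\Phi_\ell^{-1/2}$; and each of the two remaining cross-terms is $\le\Phi_\ell^{-1}$ by Cauchy--Schwarz (in $j$ for the interference term, using $\dsum_{j\ne i,k}|\tilde h_{j,i}|^2\le\Phi_\ell^{-1}$, and $\sigma_\nu^2|{\bf w}_{{\rm{P}},i}^H{\bf w}_{{\rm{P}},k}|\le\sqrt{\tilde\sigma_i\tilde\sigma_k}\le\Phi_\ell^{-1}$ for the noise term), contributing $2(N-1)\Phi_\ell^{-1}$ in total. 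Summing, $r_i\le\bigl((N-1+\sqrt{N-1})\sqrt{\Phi_\ell}+2(N-1)\bigr)/\Phi_\ell=\alpha(\Phi_\ell)/\Phi_\ell$.

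The rest is routine. Since ${\bf R}_\ell$ is Hermitian its spectrum is real and lies in $\bigcup_i[({\bf R}_\ell)_{i,i}-r_i,\,({\bf R}_\ell)_{i,i}+r_i]$, so $\lambda_{\max}({\bf R}_\ell)\le 1+(1+\alpha(\Phi_\ell))/\Phi_\ell$ and $\lambda_{\min}({\bf R}_\ell)\ge 1-\alpha(\Phi_\ell)/\Phi_\ell=(\Phi_\ell-\alpha(\Phi_\ell))/\Phi_\ell$. Dividing the first bound by the second gives (a): $\lambda_{\max}({\bf R}_\ell)/\lambda_{\min}({\bf R}_\ell)\le(\Phi_\ell+1+\alpha(\Phi_\ell))/(\Phi_\ell-\alpha(\Phi_\ell))=1+\delta(\Phi_\ell)$. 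Combining the lower bound on $\lambda_{\min}$ with $\tr{{\bf R}_\ell}\le N(\Phi_\ell+1)/\Phi_\ell$ gives (b): $\lambda_{\min}({\bf R}_\ell)/\tr{{\bf R}_\ell}\ge(\Phi_\ell-\alpha(\Phi_\ell))/\bigl(N(\Phi_\ell+1)\bigr)=\frac{1}{N}\bigl(1-(\alpha(\Phi_\ell)+1)/(\Phi_\ell+1)\bigr)$. The one place that needs care is the radius bound: one must avoid invoking any control over the \emph{row} norms of $\tilde{\bf H}$, since the SINR constrains only its column norms — this asymmetry is exactly what forces the $\tilde h_{i,k}$ row-sum to keep the full factor $N-1$ rather than $\sqrt{N-1}$, which is where the coefficient $(N-1+\sqrt{N-1})$ in $\alpha(\cdot)$ comes from. (When $\Phi_\ell\le\alpha(\Phi_\ell)$ the Gershgorin lower bound on $\lambda_{\min}$ is vacuous, but then the right-hand sides of (a) and (b) are themselves non-informative, matching the reduction to the nontrivial high-SINR case made at the start of the proof.)
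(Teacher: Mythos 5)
Your proof is correct and follows essentially the same route as the paper's: Gershgorin applied to ${\bf R}_\ell$ with the diagonal pinned at $1+\Phi_i^{-1}$ via the normalization $\tilde h_{i,i}=1$, the off-diagonal entry split into the two $j\in\{i,k\}$ summands plus a residual cross-term and a noise cross-term, the row-indexed entries bounded one at a time (factor $N-1$) and the column-indexed ones by Cauchy--Schwarz (factor $\sqrt{N-1}$), yielding exactly the paper's bound $B_\ell\le\alpha(\Phi_\ell)/\Phi_\ell$ and then the same arithmetic for (a) and (b). No gaps.
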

\begin{proof}
See Appendix \ref{appendix:proof_of_lemmas}.
\end{proof}

Lemma \ref{lemma:MSE_propagation_bound} upper bounds the MSE by an expression that involves the $\ell_1$-norm of ${\bf F}_\ell$. In order to get the desired bound, we need to bound $\|{\bf F}_\ell\|_1$ using the minimal input SINR. To that end, we first show that
\begin{equation}
\label{eq:l1_norm_of_F}
\|{\bf F}_\ell\|_1=1-\frac{8}{9}g\left(\frac{\lambda_{\min}\left({\bf R}_\ell\right)}{\tr{{\bf R}_\ell}}\right)
\end{equation}
for $\mu_\ell=\frac{1}{3\tr{{\bf R}_\ell}}$, where $g(x)$ is defined in Theorem \ref{theorem:main_abstract}. Then, using Lemma \ref{lemma:bound_condition_norm1F_MSE_minimal_eig_to_traceR}(\ref{item:minimal_eigenvalue_to_traceR_lower_bound_by_SINR}) and the monotonicity of $g(x)$ in the relevant region, we give a SINR-based upper bound on $\|{\bf F}_\ell\|_1$. Lemma \ref{lemma:upper_bound_on_l1_norm_of_F} summarizes this approach.
\begin{lemma}
\label{lemma:upper_bound_on_l1_norm_of_F}
For $N\geq 2$,
\begin{equation}
\label{eq:upper_bound_on_l1_norm_of_F}
\|{\bf F}_\ell\|_{1}\leq 1-\frac{8}{9}g\left(\frac{1}{N}(1-\gamma(\Phi))\right),
\end{equation}
where $\gamma(\Phi)$ is defined in Theorem \ref{theorem:main_abstract}.
\end{lemma}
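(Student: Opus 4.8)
The plan is to carry out the two-stage programme already indicated before the lemma: first evaluate $\|{\bf F}_\ell\|_1$ \emph{exactly} in terms of the ratio $\lambda_{\min}({\bf R}_\ell)/\tr{{\bf R}_\ell}$, and then feed in the SINR-based lower bound on that ratio supplied by Lemma~\ref{lemma:bound_condition_norm1F_MSE_minimal_eig_to_traceR}(\ref{item:minimal_eigenvalue_to_traceR_lower_bound_by_SINR}).

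For the first stage I would work directly from the form ${\bf F}_\ell=\diag({\bs\rho}_\ell)+4\mu_\ell^2{\bs\lambda}_\ell{\bs\lambda}_\ell^T$ in (\ref{eq:F}). Since every $\lambda_{\ell,i}>0$, all off-diagonal entries $4\mu_\ell^2\lambda_{\ell,i}\lambda_{\ell,j}$ are strictly positive, and substituting $\mu_\ell=\frac{1}{3\tr{{\bf R}_\ell}}$ (so that each $\mu_\ell\lambda_{\ell,j}\le\frac13$) a short completing-the-square check shows the diagonal entries $\rho_{\ell,j}+4\mu_\ell^2\lambda_{\ell,j}^2=1-4\mu_\ell\lambda_{\ell,j}+(8q+4)\mu_\ell^2\lambda_{\ell,j}^2$ are positive too (the discriminant is negative). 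Hence the induced $\ell_1$-norm (the largest absolute column sum) is exactly the largest plain column sum, $\|{\bf F}_\ell\|_1=\dmax_j\big(\rho_{\ell,j}+4\mu_\ell^2\lambda_{\ell,j}\tr{{\bf R}_\ell}\big)$. Writing $x_j:=\lambda_{\ell,j}/\tr{{\bf R}_\ell}=3\mu_\ell\lambda_{\ell,j}$ and using $\mu_\ell\tr{{\bf R}_\ell}=\frac13$, each column sum collapses, in both the complex ($q=\frac12$) and real ($q=1$) cases, to precisely $1-\frac89 g(x_j)$ — routine algebra. Thus $\|{\bf F}_\ell\|_1=1-\frac89\dmin_j g(x_j)$, and it remains only to identify the minimiser of $g(x_j)$ over the normalised eigenvalues, which should be $x_j=\lambda_{\min}({\bf R}_\ell)/\tr{{\bf R}_\ell}$; this gives (\ref{eq:l1_norm_of_F}).

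Locating that minimiser is the one place where a little care is needed, and it is the step I expect to be the main (though mild) obstacle. In the complex case $g(x)=x-\frac12 x^2$ is nondecreasing on $[0,1]$, so $\dmin_j g(x_j)=g(\dmin_j x_j)=g(\lambda_{\min}({\bf R}_\ell)/\tr{{\bf R}_\ell})$ immediately. In the real case $g(x)=x-x^2$ is increasing only up to $\frac12$, so I would argue as follows: $\lambda_{\min}({\bf R}_\ell)/\tr{{\bf R}_\ell}\le\frac1N\le\frac12$, and for $N\ge2$ any normalised eigenvalue $x_j$ other than the smallest satisfies $x_j+\lambda_{\min}({\bf R}_\ell)/\tr{{\bf R}_\ell}\le1$ (the remaining $N-2\ge0$ eigenvalues are nonnegative), hence $1-x_j\ge\lambda_{\min}({\bf R}_\ell)/\tr{{\bf R}_\ell}$; combining monotonicity of $g$ on $[0,\frac12]$ with the symmetry $g(x)=g(1-x)$ then yields $g(x_j)\ge g(\lambda_{\min}({\bf R}_\ell)/\tr{{\bf R}_\ell})$ for every $j$. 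This proves (\ref{eq:l1_norm_of_F}).

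For the final stage I would invoke Lemma~\ref{lemma:bound_condition_norm1F_MSE_minimal_eig_to_traceR}(\ref{item:minimal_eigenvalue_to_traceR_lower_bound_by_SINR}), which gives $\lambda_{\min}({\bf R}_\ell)/\tr{{\bf R}_\ell}\ge\frac1N(1-\gamma(\Phi_\ell))$. Both this lower bound and the true value $\lambda_{\min}({\bf R}_\ell)/\tr{{\bf R}_\ell}$ lie in $(-\infty,\frac12]$ — the true value because it is $\le\frac1N\le\frac12$, the bound because it too is $\le\frac1N$ — and $g$ is nondecreasing on that whole range in both the complex and the real case ($g'(x)=1-x$ and $g'(x)=1-2x$ respectively). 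Therefore $g\big(\lambda_{\min}({\bf R}_\ell)/\tr{{\bf R}_\ell}\big)\ge g\big(\frac1N(1-\gamma(\Phi_\ell))\big)$, and substituting this into (\ref{eq:l1_norm_of_F}) gives exactly (\ref{eq:upper_bound_on_l1_norm_of_F}). (When $\gamma(\Phi_\ell)\ge1$ — the uninteresting low-SINR regime — the right-hand side of (\ref{eq:upper_bound_on_l1_norm_of_F}) is already $\ge1$, so the bound is trivially true.)
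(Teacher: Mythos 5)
Your proposal is correct and follows essentially the same route as the paper: it first establishes the exact identity $\|{\bf F}_\ell\|_1=1-\frac{8}{9}g\bigl(\lambda_{\min}({\bf R}_\ell)/\tr{{\bf R}_\ell}\bigr)$ from the column sums of (\ref{eq:F}) at $\mu_\ell=\frac{1}{3\tr{{\bf R}_\ell}}$, and then applies Lemma~\ref{lemma:bound_condition_norm1F_MSE_minimal_eig_to_traceR}(\ref{item:minimal_eigenvalue_to_traceR_lower_bound_by_SINR}) together with the monotonicity of $g$ below $\tfrac12$. The only (cosmetic) difference is in identifying the minimizer of $g$ over the normalized eigenvalues in the real case, where you use the symmetry $g(x)=g(1-x)$ and $x_j+x_{\min}\le 1$, while the paper uses concavity of $g$ plus an explicit computation of $g\bigl(1-(N-1)\psi_{\ell,1}\bigr)-g(\psi_{\ell,1})$; both are valid.
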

\begin{proof}
See Appendix \ref{appendix:proof_of_lemmas}.
\end{proof}

\begin{observation}
\label{observation:eigenvaues_of_F}
It is known that for $\mu_\ell=\frac{1}{3\tr{{\bf R}_\ell}}$ the eigenvalues of ${\bf F}_\ell$ are smaller than $1$ for LMS over the reals \cite{Feuer_Convergence_1985}.
Note that using the fact that $|\lambda({\bf F}_\ell)|\leq\|{\bf F}_\ell\|_{1}$ for any symmetric matrix and that $0<g(x)<1$ for $0\geq x\geq 1$, equation (\ref{eq:l1_norm_of_F}) implies that for $\mu_\ell=\frac{1}{3\tr{{\bf R}_\ell}}$ the eigenvalues of ${\bf F}_\ell$ are smaller than $1$ for the complex case as well.
\end{observation}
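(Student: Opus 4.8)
The plan is to obtain the complex case as an immediate consequence of the closed form (\ref{eq:l1_norm_of_F}), thereby avoiding any direct computation of the eigenvalues of ${\bf F}_\ell$; the real case is already available from \cite{Feuer_Convergence_1985}. Concretely, for $\mu_\ell=\frac{1}{3\tr{{\bf R}_\ell}}$ I would bound the spectral radius of ${\bf F}_\ell$ by its induced $\ell_1$ norm and then show that the right-hand side of (\ref{eq:l1_norm_of_F}) is strictly below $1$.

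First I would record that, by (\ref{eq:F}), ${\bf F}_\ell=\diag({\bs\lambda}_\ell\ \text{part aside})$, that is ${\bf F}_\ell=\diag({\bs \rho}_\ell)+4\mu_\ell^2{\bs\lambda}_\ell{\bs\lambda}_\ell^T$, is a real diagonal matrix plus a symmetric rank-one matrix, hence real symmetric with real spectrum. For any square matrix the spectral radius is dominated by any induced norm, so every eigenvalue $\lambda({\bf F}_\ell)$ satisfies $|\lambda({\bf F}_\ell)|\leq\|{\bf F}_\ell\|_1$; combined with the realness of the spectrum this reduces the claim $\lambda({\bf F}_\ell)<1$ to $\|{\bf F}_\ell\|_1<1$. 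As a sanity check, all entries of ${\bf F}_\ell$ are non-negative: the off-diagonal entries equal $4\mu_\ell^2\lambda_{\ell,i}\lambda_{\ell,j}$, and for $q=\frac{1}{2}$ the diagonal entries equal $(1-2\mu_\ell\lambda_{\ell,i})^2+4\mu_\ell^2\lambda_{\ell,i}^2\geq 0$, so $\|{\bf F}_\ell\|_1$ is indeed the largest column sum used in (\ref{eq:l1_norm_of_F}).

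It then suffices to show that $g\!\left(\frac{\lambda_{\min}({\bf R}_\ell)}{\tr{{\bf R}_\ell}}\right)>0$. Since ${\bf R}_\ell$ is positive definite we have $\lambda_{\min}({\bf R}_\ell)>0$, and since the smallest eigenvalue of ${\bf R}_\ell$ does not exceed the average of its $N$ eigenvalues, $x:=\frac{\lambda_{\min}({\bf R}_\ell)}{\tr{{\bf R}_\ell}}\leq\frac{1}{N}\leq\frac{1}{2}$ for $N\geq 2$; hence $x\in(0,\frac{1}{2}]$. On that interval $g$ is strictly positive in both settings: for the complex LMS $g(x)=x-\frac{1}{2}x^2=\frac{1}{2}x(2-x)\geq\frac{3}{4}x>0$, and for the real LMS $g(x)=x-x^2=x(1-x)\geq\frac{1}{2}x>0$. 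Therefore $\|{\bf F}_\ell\|_1=1-\frac{8}{9}g(x)<1$, and so every eigenvalue of ${\bf F}_\ell$ obeys $\lambda({\bf F}_\ell)\leq|\lambda({\bf F}_\ell)|\leq\|{\bf F}_\ell\|_1<1$. This settles the complex case (and, incidentally, re-derives the real case of \cite{Feuer_Convergence_1985}).

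I do not expect a genuine obstacle: the observation is essentially a corollary of (\ref{eq:l1_norm_of_F}). The only point needing some care is (\ref{eq:l1_norm_of_F}) itself, i.e.\ that the maximal column sum of ${\bf F}_\ell$ is attained at the index of $\lambda_{\min}({\bf R}_\ell)$; this rests on the eigenvalues of ${\bf R}_\ell$ summing to $\tr{{\bf R}_\ell}$, so that for $N\geq 2$ the normalized smallest and largest eigenvalues add up to at most $1$, whence $g$ (which is unimodal on $[0,1]$) is minimized over the spectrum of ${\bf R}_\ell$ at $\lambda_{\min}({\bf R}_\ell)$. Since (\ref{eq:l1_norm_of_F}) is already in place, in the proof I would simply invoke it.
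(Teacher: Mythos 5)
Your argument is correct and is essentially the paper's own: the paper treats the observation as an immediate corollary of the closed form (\ref{eq:l1_norm_of_F}) via the bound $|\lambda({\bf F}_\ell)|\leq\|{\bf F}_\ell\|_1$ and the positivity of $g$ on the relevant range, exactly as you do. Your added checks (non-negativity of the entries of ${\bf F}_\ell$, and that $\lambda_{\min}({\bf R}_\ell)/\tr{{\bf R}_\ell}\in(0,\tfrac{1}{N}]$ so $g>0$ there) are correct fillings-in of details the paper leaves implicit.
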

Combining Lemma \ref{lemma:MSE_propagation_bound}, Lemma \ref{lemma:bound_condition_norm1F_MSE_minimal_eig_to_traceR}(\ref{item:condition_number_R_upper_bound_by_SINR}) and Lemma \ref{lemma:upper_bound_on_l1_norm_of_F}, and using the definitions of $a$ and $c$ from Theorem \ref{theorem:main_abstract},  yields the following inequality for $n\in\mathcal{F}_\ell$
\begin{equation}
\label{eq:MSE_bound_using_SINR_and_MSE0}
\|{\bs\epsilon}[n]\|_\infty\leq c^{-1}\cdot a^{-\left(n-n_\ell\right)}\|{\bs\epsilon}[n_\ell]\|_\infty+\eta_{\infty}(\ell).
\end{equation}

In order to complete the proof we need to characterize the relation between the SINR and the MSE at the output of the Deep-LMS. To that end, denote the SINR at the $i$-th output of the Deep-LMS by $\Psi_i[n]$ and consider the following lemma:
\begin{lemma}
\label{lemma:SINR_MSE_relation}
\begin{enumerate}[(a)]
  \item \label{item:n_ell_SINR_MSE}
For $n_\ell\in\mathcal{U}$, $\Phi_i[n_\ell]=\epsilon_i^{-1}[n_\ell]$.
  \item \label{item:high_SINR_MSE}
 If $\epsilon_i[n]<1$, the SINR at the output of the Deep-LMS $\Psi_i[n]\geq\epsilon_i^{-1}[n]-1$.
\end{enumerate}
\end{lemma}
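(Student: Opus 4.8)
The plan is to handle the two parts separately, in each first expressing the output SINR and the per-user MSE through the gains that map the data vector ${\bf d}$ to the LMS output, and then reducing the assertion to an elementary inequality. For part (a), I would use that at an update time $n_\ell\in\mathcal{U}$ the LMS block has just been reset, ${\bf W}[n_\ell]={\bf I}$, so ${\bf x}[n_\ell]={\bf u}[n_\ell]$ and the Deep-LMS output coincides with its input. Writing the $i$-th input as $u_i[n_\ell]=\tilde h_{i,i}[n_\ell]d_i[n_\ell]+\sum_{j\neq i}\tilde h_{j,i}[n_\ell]d_j[n_\ell]+\tilde\nu_i[n_\ell]$ and invoking the normalization $\tilde{\bf D}[n_\ell]$, which forces $\tilde h_{i,i}[n_\ell]=1$, the error $e_i[n_\ell]=d_i[n_\ell]-u_i[n_\ell]$ reduces to minus the crosstalk-plus-noise terms. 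Since ${\bf d}$ is white with unit-variance entries and independent of the noise, $\epsilon_i[n_\ell]=E\{|e_i[n_\ell]|^2\}=\sum_{j\neq i}|\tilde h_{j,i}[n_\ell]|^2+\tilde\sigma_i[n_\ell]$, which is precisely the denominator of $\Phi_i[n_\ell]$ in (\ref{eq:SINR}), a fraction whose numerator is now $1$; hence $\Phi_i[n_\ell]=\epsilon_i^{-1}[n_\ell]$.

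For part (b), I would fix a general $n\in\mathcal{F}_\ell$ and expand the $i$-th Deep-LMS output as $x_i[n]=c_{i,i}d_i+\sum_{j\neq i}c_{j,i}d_j+\xi_i$, where the $c_{j,i}$ are the composite gains built from ${\bf W}[n]$ and $\tilde{\bf H}[n]$ and $\xi_i$ is the filtered noise. Let $B$ denote the interference-plus-noise power at this output, so $\Psi_i[n]=|c_{i,i}|^2/B$, and set $\beta:=1-c_{i,i}$, so that by the same whiteness and independence argument $\epsilon_i[n]=|\beta|^2+B$. The claim is immediate when $B=0$ (then $\Psi_i[n]=\infty$, while $\epsilon_i[n]<1$ forces $c_{i,i}=1-\beta\neq 0$), so I would assume $B>0$; clearing denominators in $\Psi_i[n]\ge\epsilon_i^{-1}[n]-1$ and rearranging, the assertion becomes equivalent to $|1-\beta|^2|\beta|^2\ge B(2\Re\beta-2|\beta|^2-B)$. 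I would then bound the right-hand side by the vertex value $(\Re\beta-|\beta|^2)^2$ of this concave quadratic in $B$, and note that, using $|1-\beta|^2=1-2\Re\beta+|\beta|^2$, the difference $|1-\beta|^2|\beta|^2-(\Re\beta-|\beta|^2)^2$ equals $|\beta|^2-(\Re\beta)^2\ge 0$; chaining the two inequalities closes part (b).

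The genuinely delicate point is this last step: for real-valued LMS the inequality $|1-\beta|^2|\beta|^2\ge(\Re\beta-|\beta|^2)^2$ holds with equality, so it is the complex case that actually relies on the slack $|\beta|^2\ge(\Re\beta)^2$. One also has to peel off the sub-case $B=0$ before cross-multiplying, since otherwise the algebraic steps are not reversible; but with the hypothesis $\epsilon_i[n]<1$ in force (so that $|\beta|<1$ and $\epsilon_i^{-1}[n]-1>0$) these are the only caveats, and everything else is routine bookkeeping with the signal model (\ref{eq:r}) and the definitions (\ref{eq:Deep_LMS_u})--(\ref{eq:Deep_LMS_x}).
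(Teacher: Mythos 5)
Your proof is correct and follows essentially the same route as the paper: both parts rest on the same decomposition $\epsilon_i[n]=|1-\check h_{ii}|^2+B$ and $\Psi_i[n]=|\check h_{ii}|^2/B$ (with $B$ the interference-plus-noise power at the output), part (a) using the reset ${\bf W}[n_\ell]={\bf I}$ together with the normalization that forces the effective diagonal gain to be $1$. The only difference is the closing elementary step: the paper first passes to the real case via $|1-\check h_{ii}|^2\geq(1-|\check h_{ii}|)^2$ and minimizes $\omega^2/(\epsilon_i-(1-\omega)^2)$ over the diagonal-gain magnitude $\omega$, whereas you keep $\beta=1-\check h_{ii}$ complex, maximize the concave quadratic in $B$, and absorb the complex case through the slack $|\beta|^2\geq(\Re\beta)^2$ --- an equally valid variant of the same optimization.
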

\begin{proof}
See Appendix \ref{appendix:proof_of_lemmas}.
\end{proof}

From (\ref{eq:MSE_bound_using_SINR_and_MSE0}) and Lemma \ref{lemma:SINR_MSE_relation}(\ref{item:n_ell_SINR_MSE}) we have that for $n\in\mathcal{F}_\ell$
\begin{equation}
\label{eq:MSE_bound_using_SINR_and_SINR0}
\|{\bs\epsilon}[n]\|_\infty\leq c^{-1}\cdot a^{-\left(n-n_\ell\right)}\Phi_\ell^{-1}+\eta_{\infty}(\ell).
\end{equation}
The operation of the Deep-LMS at the update time $n_{\ell+1}$
is composed of an additional step of the LMS followed by
the normalization by $\tilde{\bf D}$. As the multiplication by a diagonal matrix does not change the SINR, (\ref{eq:MSE_bound_using_SINR_and_SINR0}) can also be used for the evaluation of the SINR at time $n_{\ell+1}$. Thus, the proof of Theorem \ref{theorem:main_abstract} is completed by applying Lemma \ref{lemma:SINR_MSE_relation}(\ref{item:high_SINR_MSE}) on (\ref{eq:MSE_bound_using_SINR_and_SINR0}) at time $n_{\ell+1}$. Note that the condition $\epsilon_i[n]=1$ is equivalent to $\left(c\cdot a^{n_{\ell+1}-n_\ell}\Phi_{\ell}\right)^{-1}+\eta_{\infty}(\ell)> 1$ which was discussed at the beginning of the proof.
\qed
\section{Numerical Results}
\label{sec:numerical_results}
In this section we demonstrate the performance of the Deep-LMS algorithm and
compare it to the performance of the conventional LMS. We also examine the performance of the accelerated versions where the LMS component is replaced by its averaged version.

In \cite{Zanko_Gigabit_2016}, the performance of the Deep-LMS algorithm was tested using model-based channel matrices. In this paper, we used channel matrices measured by BT over a $100$m $10$ pair $0.5$mm copper cable \cite{XXX}. We simulated a typical up-stream scenario with a transmit PSD mask as in \cite{G.9701} and a colored background noise of $-140dBm/Hz$ below $30MHz$ and $-150dBm/Hz$ above $30MHz$ \cite{Strobel_Coexistence_2015}.

An illustration of the eigenvalue spread of the received signal correlation matrix is shown in Fig \ref{fig:non_diagonal_dominant}. The figure also illustrates the non-diagonal dominant structure of the channel matrix at high frequencies.
\begin{figure}[t]
 \centering
  \includegraphics[width=0.5\textwidth]{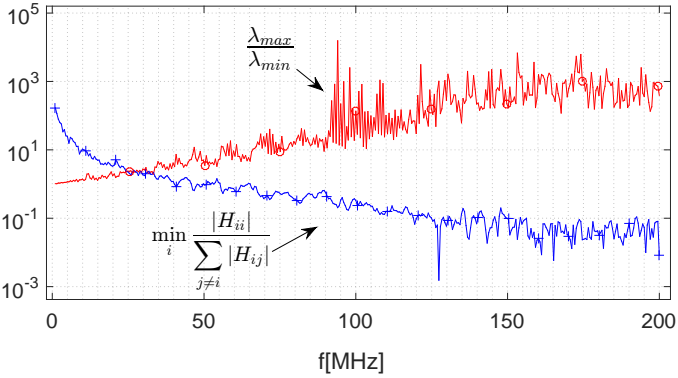}
  \caption{An illustration of the eigenvalue spread of the correlation matrix of the received signal and the non-diagonal dominant structure of the channel matrix at high frequencies.}
  \label{fig:non_diagonal_dominant}
  \label{fig:ill_condition}
\end{figure}
The diagonal dominance is measured by the ratio between the diagonal term and the sum of the absolute value of all non diagonal terms \cite{Bergel_Performance_2013}. Most traditional DSL algorithms require this ratio to be larger than $1$. Thus Fig. \ref{fig:non_diagonal_dominant} demonstrates the need for novel vectoring algorithms that can operate in the G.fast bandwidth.

Before we present the performance of the Deep-LMS algorithm, we need to characterize the choice of update times. Since SINR estimation is an inherent part of the bit-loading component of the DMT physical layer in DSL systems, the set of update times $\mathcal{U}$ can be easily determined in real time using the measured SINR. Thus, we update the preprocessing matrix of the Deep-LMS whenever the SINR is improved by $5$dB or when more than $\tilde{n}$ iterations have elapsed since the last non-SINR based update of the preprocessing matrix.

Fig. \ref{fig:comparison_LMS_and_pre_filter_average_rate} shows the average sum rate per user in each iteration using the Deep-LMS and the conventional LMS. The sum rate of each user was computed by:
\begin{IEEEeqnarray}{rCl}
\label{eq:Gfastrate_formual}
R_t=W\dsum_{k=1}^{K}{\bigg[\log_2\left(1+\rm{SINR}_{i,k}[n]\right)\biggr]_{b_{\max}}}
\end{IEEEeqnarray}
where $\rm{SINR}_{i,k}[n]$ is the SINR at the output of each algorithm at the $i$-th user at the $k$-th frequency, $b_{\max}=12$ is the maximal number of bits per DMT frequency bin as defined in G.fast, \cite{teken}) and $[x]_b=\min\{x,b\}$. As can be seen, the Deep-LMS algorithm converges much faster than the traditional LMS; for example, it reaches $1.65$Gbps in $10^3$ iterations rather than the $3\cdot 10^3$ iterations in the conventional LMS.

Fig. \ref{fig:comparison_LMS_and_pre_filter_average_rate} also depicts the performance of the accelerated LMS that uses filter averaging of \cite{Kushner_Stochastic_1993}. In this algorithm (marked AVG-LMS) the LMS component was updated as in the conventional LMS (\ref{eq:LMS_matrix_recursion}) but the actual filter that was applied to the data at each iteration $n$ was an averaged weight matrix $\sum_{i=0}^{n}{{\theta^{n-i}\bf W}[i]}$ with a forgetting factor of $\theta=0.95$. An analysis of a stochastic approximation with this averaging method can be found for example in \cite{Wang_Weighted_1997}.
This averaging indeed accelerates the convergence, but it is still much slower than the Deep-LMS. Furthermore, the same averaging technique can also be applied to the Deep-LMS (marked AVG Deep-LMS), resulting in even faster convergence.

To further simplify the algorithm, we also tested performance when $\tilde{\bf D}$ was set to ${\bf I}$ in (\ref{eq:the_prefilter}); i.e., when we violated the assumption that the preprocessing matrix was normalized such that the direct effective channel gain of all users was $1$.
As can be seen, the loss due to this simplification is negligible. Hence, we conjecture that this normalization is required mostly for the analysis, but has no significant effect on actual performance.

\begin{figure}[t]
 \centering
  \includegraphics[width=0.5\textwidth]{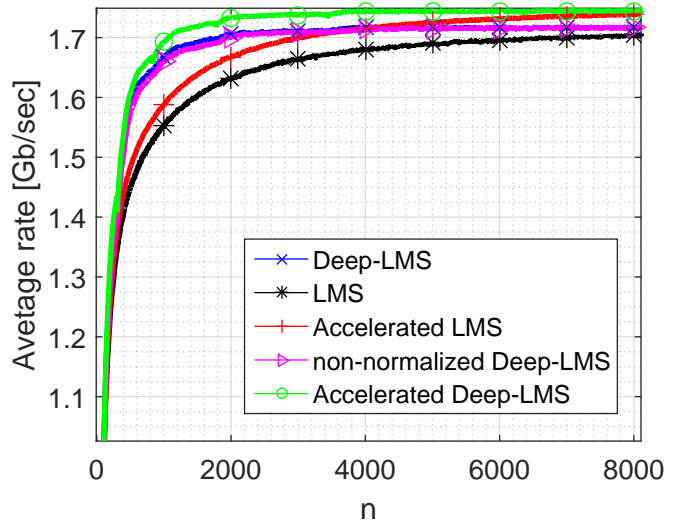}
  \caption{Average (over all users) sum rate comparison (g.fast frequency bins 17MHz-200MHz).}
  \label{fig:comparison_LMS_and_pre_filter_average_rate}
\end{figure}

\begin{figure}[t]
 \centering
  \includegraphics[width=0.5\textwidth]{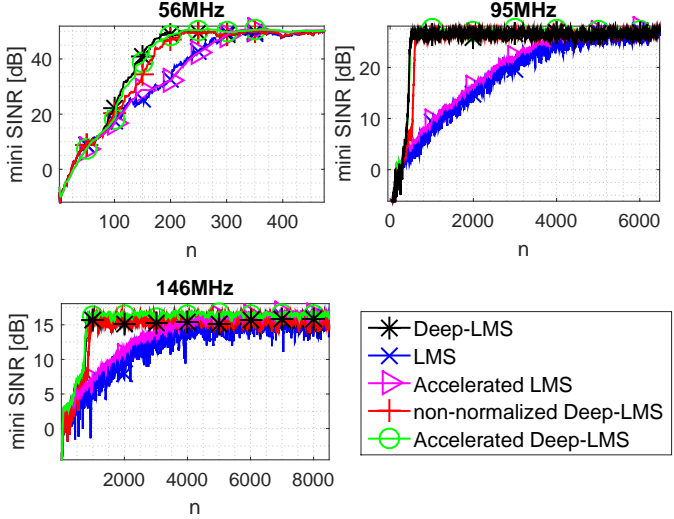}
  \caption{Minimal output $\SINR$ (over all users) comparison at certain frequency bins.}
  \label{fig:comparison_LMS_and_pre_filter_freq_rate}
\end{figure}

To better illustrate the behavior of the different algorithms, Fig. \ref{fig:comparison_LMS_and_pre_filter_freq_rate} shows the minimal SINR at the output of the adaptive crosstalk canceler for each of the algorithms in some specific frequency bins. This figure provides a better understanding of the nature of the proposed Deep-LMS algorithm. It shows that the Deep-LMS starts exactly like the conventional LMS. But, as soon as the traditional LMS manages to improve the SINR above a certain point, the Deep-LMS takes advantage of this improved SINR to produce a significant increase in the convergence rate.

In the second set of simulations, we simulated a near-far scenario. Fig \ref{fig:Gfast_NearFar} illustrates a typical G.fast near-far scenario.
\begin{figure}[t]
 \centering
  \includegraphics[width=0.2\textwidth]{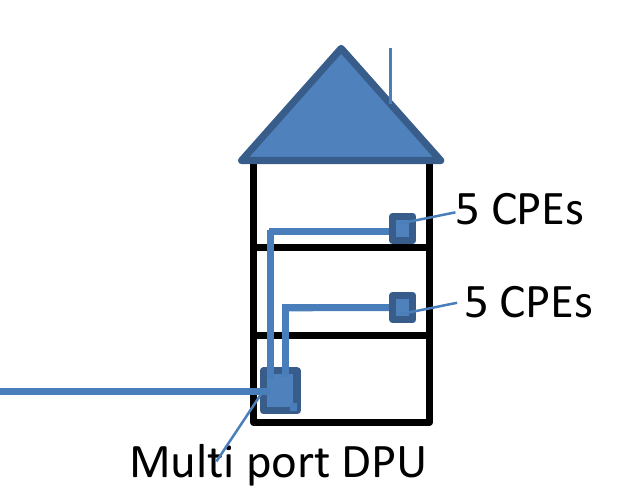}
  \caption{A typical G.fast near far scenario (a FTTB deployment).}
  \label{fig:Gfast_NearFar}
\end{figure}
Taking the same channel matrices as above \cite{XXX} and rewriting them in the form of $5\times 5$ blocks
\begin{equation}\label{eq:H_blocks}
  {\bf H}=\left[\begin{array}{cc}
                  {\bf H}_{11} & {\bf H}_{12} \\
                  {\bf H}_{21} & {\bf H}_{22}
                \end{array}\right],
\end{equation}
we constructed the channel matries for a near-far scenario in which there were $5$ CPEs that were $100$m away from the DPU and $5$ CPEs $200$m away by
\begin{equation*}
  {\bf H}_{NF}=\left[\begin{array}{cc}
                  {\bf I} & {\bs 0} \\
                  {\bs 0} & {\bf H}_{22}
                \end{array}\right]{\bf H}.
\end{equation*}
In this set of simulations, we violated the assumption that the pilots ${\bf d}[n]$ were zero mean Gaussian and used QPSK symbols instead; i.e., each of the elements in ${\bf d}[n]$ was drawn uniformly at random from the set $\{\frac{1}{\sqrt2}(\pm 1+\pm j)\}$.
Fig. \ref{fig:NearFar_Rate} shows the average (achievable) sum rate per user in each iteration using the Deep-LMS and the conventional LMS, where the rate was computed as in (\ref{eq:Gfastrate_formual}). While the algorithms were computed for all the users together, the averaging in the figure was taken separately over the far-users and over the near-users to better illustrate that our analysis in section \ref{sec:analysis} considered the weakest user.
\begin{figure}[t]
 \centering
  \includegraphics[width=0.5\textwidth]{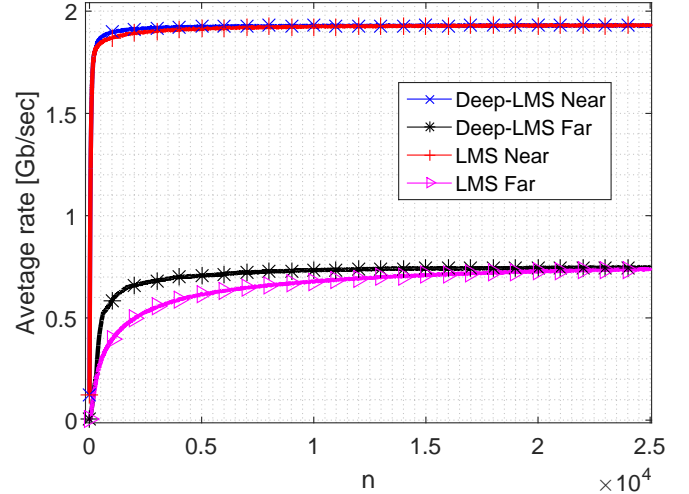}
  \caption{Average sum rate comparison (g.fast frequancy bins 17MHz-200MHz).}
  \label{fig:NearFar_Rate}
\end{figure}
As can be seen, the Deep-LMS algorithm converges much faster than the traditional LMS; for example, the average sum rate of the far users reached $600$Mbps in about one quarter of the time.

\section{Conclusion}
\label{sec:conclusion}
In this paper we presented a new multilayer LMS-based crosstalk canceler for upstream transmission in G.fast systems. The new crosstalk canceler preprocesses the received signal using an adaptive matrix prior to a conventional LMS crosstalk canceler. This preprocessing matrix is initialized by the identity matrix and at any update of the preprocessing matrix is set into the product of the current LMS crosstalk canceler and the current preprocessing matrix. The main goal of the preprocessing matrix is to modify the effective channel matrix into a diagonal dominant structure. We showed that the method can be used to speed up the convergence of the entire system given that the SINR is sufficiently high. Since, the preprocessing matrix is not frequently updated, the complexity of the algorithm is approximately twice the complexity of the conventional LMS. We believe that the proposed method can be used to accelerate LMS algorithms in other contexts as well.
\appendices
\section{Proof of Lemmas}
\label{appendix:proof_of_lemmas}

\begin{proof}[Proof of Lemma \ref{lemma:MSE_propagation}]
Following the steps of \cite{Horowitz_Performance_1981} and adapting to the complex case, we write (\ref{eq:LMS_update_U}) for $n\in\mathcal{F}_\ell$ as
\begin{align}
{\bf W}[n+1]&=\left({\bf I}-2\mu{\bf u}[n]{\bf u}^H[n]\right){\bf W}[n]+2\mu{\bf u}[n]{\bf d}^H[n].
\end{align}
Let $\tilde{\bf u}[n]:={\bf U}{\bf u}[n]$, then using (\ref{eq:V}) we can write that
\begin{align}
&{\bf V}[n+1]={\bf U}\left({\bf W}[n+1]-{\bf W}_*\right)\nonumber\\
&\quad={\bf U}\left(\left({\bf I}-2\mu{\bf u}[n]{\bf u}^H[n]\right){\bf W}[n]+2\mu {\bf u}[n]{\bf d}^H[n]-{\bf W}_{*}\right)\nonumber\\
&\quad={\bf V}[n]-2\mu\tilde{\bf u}[n]\tilde{\bf u}^H[n]{\bf U}{\bf W}[n]+2\mu\tilde{\bf u}[n]{\bf d}^H[n]\nonumber\\
&\quad={\bf V}[n]-2\mu\tilde{\bf u}[n]\tilde{\bf u}^H[n]{\bf U}\left({\bf W}[n]-{\bf W}_{*}+{\bf W}_{*}\right)+\nonumber\\
&\qquad2\mu \tilde{\bf u}[n]{\bf d}^H[n]\nonumber\\
&\quad=\left({\bf I}-2\mu\tilde{\bf u}[n]\tilde{\bf u}^H[n]\right){\bf V}[n]+2\mu\tilde{\bf u}[n]{\bf e}_{*}^H[n],
\end{align}
where the last equality is due to the fact that\\${\bf e}_{*}[n]={\bf d}[n]-{\bf W}_*^H{\bf u}$.

Denote ${\bf C}^{(j)}[n]=E\{{\bf v}_j[n]{\bf v}_j^H[n]\}$. Note that $\tilde{\bf u}[n]$ and ${\bf v}_j[n]$ are independent and that from the orthogonality property we have that $E\{\tilde{\bf u}[n]e_*^H[n]\}=0$.
Hence, taking the expectation in (\ref{eq:v_k1v_k1}), the cross terms are eliminated.
Following \cite{Feuer_Convergence_1985} we can write
\begin{align}
\label{eq:v_k1v_k1}
{\bf C}^{(j)}[n+1]=&E\{{\bf v}_j[n+1]{\bf v}_j^H[n+1]\}\nonumber\\
=& E\left\{\left({\bf I}-2\mu\tilde{\bf u}[n]\tilde{\bf u}^H[n]\right)\times\right.\nonumber\\
&\times\left.{\bf v}_j[n]{\bf v}_j^H[n]\left({\bf I}-2\mu\tilde{\bf u}[n]\tilde{\bf u}^H[n]\right)\right\}+\nonumber\\
&4\mu^2E\{|e_{*j}[n]|^2\tilde{\bf u}[n]\tilde{\bf u}^H[n]\}\nonumber\\
=&{\bf C}^{(j)}[n]-2\mu{\bs \Lambda}{\bf C}^{(j)}[n]-\nonumber\\
&2\mu{\bf C}^{(j)}[n]{\bs \Lambda}+4\mu^2\epsilon_{*j}{\bs \Lambda}+\nonumber\\
&4\mu^2\left(\tr{{\bs\Lambda}{\bf C}^{(j)}[n]}{\bs\Lambda}+{\bs\Lambda}{\bf C}^{(j)}[n]{\bs\Lambda}\right),
\end{align}
where the last term in (\ref{eq:v_k1v_k1}) is derived using the identity
$E\{{\bf z}{\bf z}{\bf A}{\bf z}{\bf z}^H\}=\tr{{\bs\Sigma}{\bf A}}{\bs\Sigma}+{\bs\Sigma}{\bf A}{\bs\Sigma}$, where ${\bf A}$ is a (deterministic) matrix, ${\bs\Sigma}=E\{{\bf z}{\bf z}^H\}$ and ${\bf z}$ is complex Gaussian random vector.

Focusing on the $i$th diagonal element of ${\bf C}^{(j)}[n+1]$:
\begin{align}
\label{eq:c_ii_j}
c_{ii}^{(j)}[n+1]&=\left(1-4\mu\lambda_i+4\mu^2\lambda_i^2\right)c_{ii}(j)[n]\nonumber\\
&+4\mu^2\lambda_i\dsum_{p=1}^N{\lambda_p}c_{pp}(j)[n]+4\mu^2\epsilon_{*j}\lambda_{i}.
\end{align}
Recall that from (\ref{eq:s_ij}) it follows that ${\bf s}_j[n]=\diag\left({\bf C}^{(j)}[n]\right)$ and rewriting (\ref{eq:c_ii_j}) in a matrix form yields the propagation of the MSE of the coefficients of the LMS block. The propagation of the MSE of the output of the LMS in the complex case can be derived following the steps of \cite{Feuer_Convergence_1985}:
first, we write the MSE of the $i$th output of the LMS as
\begin{align}
\epsilon_i[n]=E\{|d_i[n]-{\bf w}_{*i}^H{\bf u}[n]-\left(
{\bf w}_i[k]-{\bf w}_{*i}\right)^H{\bf u}[n]|^2\}.
\end{align}
By noting that $E\{{\bf u}[n]\left(d_i[n]-{\bf w}_{*i}^H{\bf u}[n]\right)\}=0$ and that $E\{{\bf u}\}=0$,
the cross terms are eliminated and we can write that $\Delta[n]=\epsilon_i[n]-\epsilon_{*i}$
\begin{align}
\Delta[n]&=E\{\left(
{\bf w}_i[n]-{\bf w}_{*i}\right)^H{\bf u}[n]{\bf u}^H[n]
\left({\bf w}_i[n]-{\bf w}_{*i}\right)\}\nonumber\\
&=E\{\tr{\left(
{\bf w}_i[n]-{\bf w}_{*i}\right)^H{\bf u}[n]{\bf u}^H[n]
\left({\bf w}_i[n]-{\bf w}_{*i}\right)}\}\nonumber\\
&=\tr{E\{{\bf u}[n]{\bf u}^H[n]
\left({\bf w}_i[n]-{\bf w}_{*i}\right)\left(
{\bf w}_i[n]-{\bf w}_{*i}\right)^H\}}\nonumber\\
&=\tr{E\{{\bf u}[n]{\bf u}^H[n]\}
E\{\left({\bf w}_i[n]-{\bf w}_{*i}\right)\left(
{\bf w}_i[n]-{\bf w}_{*i}\right)^H\}}\nonumber\\
&=\tr{{\bf R}
{\bf U}^HC^{(i)}[n]{\bf U}}\nonumber\\
&=\tr{{\bf U}^H{\bs\Lambda}{\bf U}
{\bf U}^HC^{(i)}[n]{\bf U}}\nonumber\\
&=\tr{{\bs\Lambda}C^{(i)}[n]{\bf U}{\bf U}^H}=\tr{{\bs\Lambda}C^{(i)}[n]}
\end{align}
the lemma now follows.
\end{proof}

\begin{proof}[{\bf Proof of Lemma \ref{lemma:the_need_for_l1_norm_of_F}}]
From (\ref{eq:MSE_vector_n_0}), it is obvious that
\begin{equation}
\|{\bs\epsilon}[n]\|_\infty\leq
\|{\bs 1}_N^T\tilde{\bf F}_\ell^{n-n_\ell}\tilde{\bf S}[n_\ell]\|_\infty+
\tilde{\eta}_\ell\cdot\|{\bs\epsilon}_{*}\|_\infty,
\end{equation}
where $\tilde{\eta}_\ell=4\mu_\ell^2{\bs 1}_N^T\dsum_{k=0}^{n-n_\ell-1}{\tilde{\bf F}_\ell^k}{\bs\lambda}_\ell^2+1$.
Since all entries of $\tilde{\bf F}_\ell$ and $\tilde{\bf S}[n]$ are non-negative real numbers we have that
\begin{equation}
\|{\bs 1}_{_N}^T\tilde{\bf F}_\ell^{n-n_\ell}\tilde{\bf S}[n_\ell]\|_{\infty}=\|\tilde{\bf F}_\ell^{n-n_\ell}\tilde{\bf S}[n_\ell]\|_1.
\end{equation}
The sub-multiplicative property of the induced norm yields
\begin{equation}
\|\tilde{\bf F}_\ell^{n-n_\ell}\tilde{\bf S}[n_\ell]\|_1\leq \|\tilde{\bf F}_\ell^{n-n_\ell}\|_1\|\tilde{\bf S}[n_\ell]\|_1.
\end{equation}
It is easy to verify that since $\tilde{\bf F}_\ell={\bs\Lambda}_\ell{\bf F}_\ell{\bs \Lambda}_\ell^{-1}$
\begin{equation}
\label{eq:tilde_F_to_the_power_of}
\tilde{\bf F}_\ell^{n-n_\ell}={\bs\Lambda}_\ell{\bf F}_\ell^{n-n_\ell}{\bs \Lambda}_\ell^{-1}.
\end{equation}
The fact that for all $i$ and $j$ we have that $\lambda_{\ell,i}>0$ and \\$\left({\bf F}_\ell\right)_{ij}\geq 0$ yields
\begin{align}
\|\tilde{\bf F}_\ell^{n-n_\ell}\|_{1}&\leq
\frac{\lambda_{\max}({\bf R}_{\ell})}{\lambda_{\min}({\bf R}_{\ell})}
\|{\bf F}_\ell^{n-n_\ell}\|_{1}\nonumber\\&\leq\frac{\lambda_{\max}({\bf R}_{\ell})}{\lambda_{\min}({\bf R}_{\ell})}\left(\|{\bf F}_\ell\|_{1}\right)^{n-n_\ell}.
\end{align}

Since all entries of $\tilde{\bf S}$ are non-negative real numbers
we have that
\begin{equation}
\|\tilde{\bf S}[n_\ell]\|_1=\|{\bs 1}_{_{N}}^T\tilde{\bf S}[n_\ell]\|_{\infty}=\|{\bs\epsilon}[n_\ell]\|_{\infty}.
\end{equation}

(\ref{eq:Maximal_Deep_LMS}) follows from the fact that if all eigenvalues of ${\bf F}_\ell$ are smaller than $1$, the first term in the RHS of (\ref{eq:MSE_vector_n_0}) is eliminated and therefore it is obvious that
\begin{equation}\label{eq:explicit_eta_infty_ell}
\eta_{\infty}\left(\ell\right)=4\mu_\ell^2{\bs 1}_N^T\dsum_{k=0}^{n-n_\ell-1}{\tilde{\bf F}_\ell^k}{\bs\lambda}_\ell^2\|{\bs\epsilon}_{*}\|_\infty+\|{\bs\epsilon}_{*}\|_\infty.
\end{equation}

Substituting $\tilde{\bf F}_\ell$ we have that
\begin{align*}
{\bs 1}_N^T\dsum_{k=0}^{n-n_\ell-1}{{\bs\Lambda}{\bf F}^k{\bs\Lambda}^{-1}}{\bs\lambda}^2=\dsum_{k=0}^{n-n_\ell-1}{{\bs\lambda}^T{\bf F}^k{\bs\lambda}}.
\end{align*}
Since ${\bf F}$ is positive definite, we have that ${\bs\lambda}^T{\bf F}^k{\bs\lambda}>0$ for all $k>0$ and $\lambda\in\mathds{R}^N$. Hence,
\begin{equation}
\dsum_{k=0}^{n-n_\ell-1}{{\bs\lambda}^T{\bf F}^k{\bs\lambda}}\leq\dsum_{k=0}^{\infty}{{\bs\lambda}^T{\bf F}^k{\bs\lambda}}=
{\bs\lambda}^T\left(\dsum_{k=0}^{\infty}{{\bf F}^k}\right){\bs\lambda},
\end{equation}
The claim now follows.
\end{proof}

\begin{proof}[{\bf Proof of Lemma \ref{lemma:bound_condition_norm1F_MSE_minimal_eig_to_traceR}}]
Note that since ${\bf u}[n]=\tilde{\bf H}^H[n]{\bf d}[n]+\tilde{\bs\nu}[n]$ and that ${\bf d}[n]$ and ${\bs\nu}[n]$ are independent of each other, the
entries of ${\bf R}_\ell=E\{{\bf u}[n]{\bf u}^H[n]\}$ are given by:
\begin{equation}
\label{R_i_j_color}
\left(R_\ell\right)_{i,j}=\begin{cases}
\|\tilde{\bf h}_{i}\|_2^2+\tilde{\sigma}_{i}^2 & i=j\\
\tilde{\bf h}_i^H\tilde{\bf h}_j+E\{\tilde{\nu}_{i}\tilde{\nu}^*_{j}\} & i\neq j,
\end{cases}
\end{equation}
where $\tilde{\bf h}_i$ is the $i$-th column of $\tilde{\bf H}$. Define the sum of the absolute values of the off-diagonal elements of the row $i$ in ${\bf R}_\ell$ by $B_{\ell,i}\triangleq\dsum_{j\neq i}{|\left(R_{\ell}\right)_{i,j}|}$. Recall that $\lambda_{\ell,i}\;\;i=1,2,\cdots,N$ are the eigenvalues of ${\bf R}_\ell$. From Gershgorin circle theorem, we have that all eigenvalues are in the union of the following regions
\begin{equation}
|\lambda-(\|\tilde{\bf h}_i\|_2^2+\tilde{\sigma}_{i}^2)|\leq B_{\ell,i}\qquad i=1,2,\cdots,N.
\end{equation}
Rearranging the expression we have that
\begin{equation}
\label{eq:lambda_i_Gershgorin_bounds}
\|\tilde{\bf h}_i\|_2^2+\tilde{\sigma}_{i}^2-B_{\ell,i}\leq \lambda\leq\|\tilde{\bf h}_i\|_2^2+\tilde{\sigma}_{i}^2+B_{\ell,i}.
\end{equation}
Since by assumption $\tilde{h}_{ii}[n]=1$, $1\leq \|\tilde{\bf h}_i\|_2^2+\tilde{\sigma}_{i}^2$. Hence,
\begin{equation}
\label{eq:h_tilde_norm_lower_bound}
1-B_{\ell,i}\leq \lambda.
\end{equation}
On the other hand, using (\ref{eq:SINR}) and $\Phi_\ell=\dmin_{i}\Phi_i[n]$ implies that
\begin{equation}
\label{eq:SINR_assumption_1T}
\frac{1}{\Phi_\ell}\geq\frac{1}{\Phi_i[n]}=\dsum_{j\neq k}{|\tilde{h}_{j,k}|^2}+\tilde{\sigma}^2_k.
\end{equation}
By adding $|\tilde{h}_{k,k}|^2=1$ to both sides of (\ref{eq:SINR_assumption_1T}) we have that
\begin{equation}
\label{eq:norm_h_bounds}
1+\frac{1}{\Phi_\ell}\geq |\tilde{h}_{k,k}|^2+\dsum_{j\neq k}{|\tilde{h}_{j,k}|^2}+\tilde{\sigma}^2_k=\|\tilde{\bf h}_k\|_2^2+\tilde{\sigma}^2_k.
\end{equation}
Hence,
\begin{equation}
\label{eq:trace_upper_bound}
\tr{{\bf R}_\ell}=\dsum_{k}{\|\tilde{\bf h}_k\|_2^2+\tilde{\sigma}_k}\leq N\left(1+\frac{1}{\Phi}\right).
\end{equation}
Combining (\ref{eq:lambda_i_Gershgorin_bounds}), (\ref{eq:h_tilde_norm_lower_bound}) and (\ref{eq:norm_h_bounds}) yields
\begin{equation}
1-B_{\ell,i}\leq \lambda \leq 1+\frac{1}{\Phi}+B_{\ell,i}.
\end{equation}
Hence,
\begin{equation}
1-B_\ell\leq\lambda_{\min}\left({\bf R}_\ell\right)\leq
\lambda_{\max}\left({\bf R}_\ell\right)\leq 1+\frac{1}{\Phi}+B_\ell,
\end{equation}
where $B_\ell\triangleq\max_{j}{B_{\ell,j}}$. Thus,
\begin{equation}
\label{eq:bound_lambda_max_min_ratio_using_B_ell}
\frac{\lambda_{\max}\left({\bf R}_\ell\right)}{\lambda_{\min}\left({\bf R}_\ell\right)}\leq
\frac{1+\frac{1}{\Phi}+B_\ell}{1-B_\ell}.
\end{equation}
On the other hand, combining (\ref{eq:h_tilde_norm_lower_bound}) and (\ref{eq:trace_upper_bound}) yields
\begin{equation}
\label{eq:bound_lambda_min_to_traceR_ratio}
\frac{\lambda_{\min}\left({\bf R}_\ell\right)}{\tr{{\bf R}_\ell}}\geq\frac{1-B_\ell}{N\left(1+\frac{1}{\Phi}\right)}.
\end{equation}
\begin{lemma}
\label{lemma:bound_B_ell}
$B_\ell$ is upper bounded by:
\begin{equation*}
B_\ell\leq\frac{\alpha\left(\Phi_\ell\right)}{\Phi_\ell}.
\end{equation*}
\end{lemma}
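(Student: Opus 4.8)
The plan is to bound $B_{\ell,i}=\sum_{j\neq i}|(R_\ell)_{i,j}|$ for each fixed row $i$ directly from the off-diagonal formula (\ref{R_i_j_color}) and the SINR constraint (\ref{eq:SINR_assumption_1T}), and then take the maximum over $i$. Write $\beta\triangleq 1/\Phi_\ell$, so that (\ref{eq:SINR_assumption_1T}) says that for \emph{every} column index $k$ we have $\sum_{j\neq k}|\tilde h_{j,k}|^2+\tilde\sigma_k^2\leq\beta$; in particular $|\tilde h_{i,j}|^2\leq\beta$ for all $i\neq j$ and $\tilde\sigma_k^2\leq\beta$ for all $k$. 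The single idea in the proof is that every estimate must be reduced to a sum over a \emph{column} of $\tilde{\bf H}$, since that is where the SINR bound lives; dropping a further index from such a column sum keeps it $\leq\beta$ because all its terms are nonnegative.

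First I would split each off-diagonal entry by the triangle inequality: from (\ref{R_i_j_color}), $|(R_\ell)_{i,j}|\leq|\tilde{\bf h}_i^H\tilde{\bf h}_j|+|E\{\tilde\nu_i\tilde\nu_j^*\}|$. For the channel part, expand $\tilde{\bf h}_i^H\tilde{\bf h}_j=\sum_k\tilde h_{k,i}^*\tilde h_{k,j}$ and peel off the terms $k=i$ and $k=j$; since $\tilde{\bf D}$ normalizes so that $\tilde h_{i,i}=\tilde h_{j,j}=1$, this yields
\begin{equation*}
|\tilde{\bf h}_i^H\tilde{\bf h}_j|\leq|\tilde h_{i,j}|+|\tilde h_{j,i}|+\sum_{k\neq i,j}|\tilde h_{k,i}|\,|\tilde h_{k,j}|.
\end{equation*}
For the noise part, the cross-covariance $E\{\tilde\nu_i\tilde\nu_j^*\}$ is an entry of the positive semidefinite matrix $E\{\tilde{\bs\nu}\tilde{\bs\nu}^H\}$ whose $i$-th diagonal entry is $\tilde\sigma_i^2$, hence $|E\{\tilde\nu_i\tilde\nu_j^*\}|\leq\tilde\sigma_i\tilde\sigma_j$.

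Summing over $j\neq i$ then splits $B_{\ell,i}$ into four elementary sums, each dispatched by a Cauchy--Schwarz step along a column: (i) $\sum_{j\neq i}|\tilde h_{i,j}|\leq(N-1)\sqrt\beta$, using $|\tilde h_{i,j}|\leq\sqrt\beta$ termwise; (ii) $\sum_{j\neq i}|\tilde h_{j,i}|\leq\sqrt{N-1}\,\big(\sum_{j\neq i}|\tilde h_{j,i}|^2\big)^{1/2}\leq\sqrt{N-1}\,\sqrt\beta$; (iii) $\sum_{j\neq i}\sum_{k\neq i,j}|\tilde h_{k,i}|\,|\tilde h_{k,j}|\leq\sum_{j\neq i}\big(\sum_{k\neq i,j}|\tilde h_{k,i}|^2\big)^{1/2}\big(\sum_{k\neq i,j}|\tilde h_{k,j}|^2\big)^{1/2}\leq(N-1)\beta$; and (iv) $\sum_{j\neq i}\tilde\sigma_i\tilde\sigma_j\leq(N-1)\beta$, using $\tilde\sigma_k^2\leq\beta$. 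Adding these four bounds gives
\begin{equation*}
B_{\ell,i}\leq\big(N-1+\sqrt{N-1}\big)\sqrt\beta+2(N-1)\beta=\frac{\big(N-1+\sqrt{N-1}\big)\sqrt{\Phi_\ell}+2(N-1)}{\Phi_\ell}=\frac{\alpha(\Phi_\ell)}{\Phi_\ell},
\end{equation*}
and since this holds for every $i$, $B_\ell=\max_j B_{\ell,j}\leq\alpha(\Phi_\ell)/\Phi_\ell$, as claimed. I do not expect a serious obstacle; the only points needing care are that the row sum in (i) cannot be improved by Cauchy--Schwarz (we have no bound on a row of $\tilde{\bf H}$, only on its columns), so its contribution $N-1$ must be matched exactly against the column-based contribution $\sqrt{N-1}$ of (ii), and that the noise cross-covariance must be handled via positive semidefiniteness rather than a direct entrywise estimate.
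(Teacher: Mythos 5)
Your proposal is correct and follows essentially the same route as the paper: the same triangle-inequality split of $(R_\ell)_{i,j}$, the same peeling of the two unit diagonal entries from $\tilde{\bf h}_i^H\tilde{\bf h}_j$ followed by Cauchy--Schwarz on the remainder, the same termwise bound on the row sum versus the Cauchy--Schwarz bound on the column sum, and the same covariance (Cauchy--Schwarz/PSD) estimate $|E\{\tilde\nu_i\tilde\nu_j^*\}|\leq\tilde\sigma_i\tilde\sigma_j\leq 1/\Phi_\ell$ for the noise. The only difference is organizational (you group the four sums over $j\neq i$ up front, the paper bounds each inner product first and then sums), and your constants assemble to exactly the paper's $\alpha(\Phi_\ell)/\Phi_\ell$.
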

\begin{proof}[Proof of Lemma \ref{lemma:bound_B_ell}]
From (\ref{R_i_j_color})
\begin{equation}
\label{eq:triangle_ineq_on_off_diag_of_R}
B_{\ell,i}\leq\dsum_{j\neq i}{|\tilde{\bf h}_{i}^H\tilde{\bf h}_{j}|}+\dsum_{j\neq i}{|E\{\tilde{\nu}_{i}\tilde{\nu}^*_{j}\}|}.
\end{equation}
Denote $\mathcal{A}=\{i,j\}$ then for $i\neq j$:
\begin{align}
|\tilde{\bf h}_i^H\tilde{\bf h}_j|&=|\tilde{h}_{i}(i)\tilde{h}_{j}^*(i)+\tilde{h}_{i}(j)\tilde{h}_{j}^*(j)+\dsum_{m\in\mathcal{A}^c}{\tilde{h}_{i}(m)\tilde{h}_{j}^*(m)}|\\
&\leq |\tilde{h}_{i}(i)\tilde{h}_{j}^*(i)|+|\tilde{h}_{i}(j)\tilde{h}_{j}^*(j)|+|\dsum_{m\in\mathcal{A}^c}{\tilde{h}_{i}(m)\tilde{h}_{j}^*(m)}|\\
&\leq |\tilde{h}_{i}(i)\tilde{h}_{j}^*(i)|+|\tilde{h}_{i}(j)\tilde{h}_{j}^*(j)|+\|(\tilde{\bf h}_i)_{\mathcal{A}^c}\|_2\|(\tilde{\bf h}_j)_{\mathcal{A}^c}\|_2,
\end{align}
where the last step is due to the Cauchy-Schwarz inequality and for any set $\mathcal{A}\subseteq\{1,2,\cdots,N\}$ and a vector ${\bf v}$, ${\bf v}_{\mathcal{A}}$ denotes the vector constructed from the entries of ${\bf v}$ indexed by $\mathcal{A}$.
Since
\begin{equation*}
\tilde{h}_{i,i}=\tilde{h}_{i}(i)=\tilde{h}_{j,j}=\tilde{h}_{j}^*(j)=1
\end{equation*}
we have that
\begin{equation}
\label{ineq:h_i_h_j_inner_product_magnitude}
|\tilde{\bf h}_i^H\tilde{\bf h}_j|\leq \|(\tilde{\bf h}_i)_{\mathcal{A}^c}\|_2\|(\tilde{\bf h}_j)_{\mathcal{A}^c}\|_2+|\tilde{h}_{j}^*(i)|+|\tilde{h}_{i}(j)|.
\end{equation}

From the definition of $\Phi_\ell$:
\begin{equation*}
\Phi_{k}=\left(\dsum_{j\neq k}{|\tilde{H}_{j,k}|^2}+\tilde{\sigma}^2_k\right)^{-1}\geq \Phi_\ell.
\end{equation*}
Hence, for any $k$ and $\mathcal{B}\subseteq\{k\}^c$ we have that
\begin{align}
\label{ineq:SINR_and_subset_of_off_diagonal_elements_of_H}
\frac{1}{\Phi_\ell}\geq\dsum_{j\neq k}{|\tilde{H}_{j,k}|^2}+\tilde{\sigma}^2_k=\|(\tilde{\bf h}_{k})_{\{k\}^c}\|^2+\tilde{\sigma}^2_k\geq\|(\tilde{\bf h}_k)_{\mathcal{B}}\|^2.
\end{align}
Therefore, (\ref{ineq:h_i_h_j_inner_product_magnitude}) becomes
\begin{equation}
\label{eq:h_i_h_j_inner_product}
|\tilde{\bf h}_i^H\tilde{\bf h}_j|\leq \frac{1}{\Phi_\ell}+|\tilde{h}_{j}(i)|+|\tilde{h}_{i}(j)|.
\end{equation}
Summing the expression in (\ref{eq:h_i_h_j_inner_product}) over $j\neq i$ yields
\begin{equation}
\dsum_{j\neq i}{|\tilde{\bf h}_i^H\tilde{\bf h}_j|}\leq\frac{N-1}{\Phi_\ell}+\dsum_{j\neq i}{|\tilde{h}_j(i)|}+\dsum_{j\neq i}{|\tilde{h}_i(j)|}.
\end{equation}
From the Cauchy-Schwartz inequality it is known that $\|{\bf x}\|_1\leq \sqrt{n}\|{\bf x}\|_2$, where $n$ is the dimensionality of vector ${\bf x}$. Hence,
\begin{equation}
\label{ineq:h_i_h_j_inner_product_magnitude_with_h}
\dsum_{j\neq i}{|\tilde{\bf h}_i^H\tilde{\bf h}_j|}\leq\frac{N-1}{\Phi_\ell}+\dsum_{j\neq i}{|\tilde{h}_j(i)|}+\sqrt{N-1}\|(\tilde{\bf h}_i)_{\{i\}^c}\|_2.
\end{equation}
By the SINR assumption both $\|(\tilde{\bf h}_i)_{\{i\}^c}\|_2^2$ and $|\tilde{h}_j(i)|^2$ are less than or equal to $\frac{1}{\Phi_\ell}$. Hence, from (\ref{ineq:h_i_h_j_inner_product_magnitude_with_h}) we have that
\begin{equation}
\label{ineq:h_i_h_j_inner_product_magnitude_no_h}
\dsum_{j\neq i}{|\tilde{\bf h}_i^H\tilde{\bf h}_j|}\leq\frac{N-1}{\Phi_\ell}+\frac{N-1}{\sqrt{\Phi_\ell}}+\sqrt{\frac{N-1}{\Phi_\ell}}.
\end{equation}
The second term in (\ref{eq:triangle_ineq_on_off_diag_of_R}) can be bounded by $\frac{N-1}{\Phi_\ell}$ using the fact that  $|E\{\tilde{\nu}_{i}\tilde{\nu}^*_{j}\}|\leq\sqrt{\tilde{\sigma}^2_i\tilde{\sigma}^2_j}\leq\max\{\tilde{\sigma}^2_i,\tilde{\sigma}^2_j\}\leq\frac{1}{\Phi_\ell}$ (where the first inequality is due to the Cauchy–Schwarz inequality).
Lemma \ref{lemma:bound_B_ell} now follows.
\end{proof}
Combining (\ref{eq:bound_lambda_max_min_ratio_using_B_ell}), (\ref{eq:bound_lambda_min_to_traceR_ratio}) and Lemma \ref{lemma:bound_B_ell} concludes the proof of Lemma \ref{lemma:bound_condition_norm1F_MSE_minimal_eig_to_traceR}.
\end{proof}
\begin{proof}[\bf{Proof of Lemma \ref{lemma:upper_bound_on_l1_norm_of_F}}]
First, we use Lemma \ref{lemma:l1_norm_of_F} to establish a closed-form expression of $\|{\bf F}_\ell\|_1$.
\begin{lemma}
\label{lemma:l1_norm_of_F}
For $\mu_\ell=\frac{1}{3\tr{{\bf R}}}$ we have that
\begin{equation*}
\|{\bf F}_\ell\|_{1}=1-\frac{8}{9}g\left(\frac{\lambda_{\min}\left({\bf R}_\ell\right)}{\tr{{\bf R}_\ell}}\right).
\end{equation*}
\end{lemma}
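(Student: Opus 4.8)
The plan is to compute $\|{\bf F}_\ell\|_1$ directly from the explicit form of ${\bf F}_\ell$ given in Lemma~\ref{lemma:MSE_propagation}, namely ${\bf F}_\ell=\diag({\bs\rho}_\ell)+4\mu_\ell^2{\bs\lambda}_\ell{\bs\lambda}_\ell^T$. Since all eigenvalues $\lambda_{\ell,i}$ are positive and (by Observation~\ref{observation:eigenvaues_of_F} and the choice $\mu_\ell=\frac{1}{3\tr{{\bf R}_\ell}}$) the diagonal entries $\rho_{\ell,i}=1-4\mu_\ell\lambda_{\ell,i}+8q\mu_\ell^2\lambda_{\ell,i}^2$ are nonnegative, every entry of ${\bf F}_\ell$ is nonnegative. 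Hence the induced $\ell_1$ norm is just the maximum over columns of the column sums: $\|{\bf F}_\ell\|_1=\max_i\big(\rho_{\ell,i}+4\mu_\ell^2\lambda_{\ell,i}\sum_{p}\lambda_{\ell,p}\big)=\max_i\big(\rho_{\ell,i}+4\mu_\ell^2\lambda_{\ell,i}\tr{{\bf R}_\ell}\big)$, using $\sum_p\lambda_{\ell,p}=\tr{{\bf R}_\ell}$.

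First I would substitute $\mu_\ell=\frac{1}{3\tr{{\bf R}_\ell}}$ so that $4\mu_\ell^2\lambda_{\ell,i}\tr{{\bf R}_\ell}=\frac{4}{9}\cdot\frac{\lambda_{\ell,i}}{\tr{{\bf R}_\ell}}$ and $4\mu_\ell\lambda_{\ell,i}=\frac{4}{3}\cdot\frac{\lambda_{\ell,i}}{\tr{{\bf R}_\ell}}$. Writing $t_i:=\lambda_{\ell,i}/\tr{{\bf R}_\ell}\in(0,1)$, the $i$-th column sum becomes
\begin{equation*}
\rho_{\ell,i}+\tfrac{4}{9}t_i = 1-\tfrac{4}{3}t_i+8q\cdot\tfrac{1}{9}t_i^2+\tfrac{4}{9}t_i = 1-\tfrac{8}{9}t_i+\tfrac{8q}{9}t_i^2 = 1-\tfrac{8}{9}\big(t_i-q\,t_i^2\big).
\end{equation*}
Recalling that $g(x)=x-q x^2$ with $q=\frac12$ in the complex case and $q=1$ over the reals (matching the $q$ in $\rho_{\ell,i}$), this is exactly $1-\frac{8}{9}g(t_i)$. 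Therefore $\|{\bf F}_\ell\|_1=\max_i\big(1-\frac{8}{9}g(t_i)\big)=1-\frac{8}{9}\min_i g(t_i)$.

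It remains to check that $\min_i g(t_i)=g(\min_i t_i)=g\big(\lambda_{\min}({\bf R}_\ell)/\tr{{\bf R}_\ell}\big)$, i.e. that $g$ is monotonically increasing on the range of the $t_i$. This is the one place requiring a small argument: $g'(x)=1-2qx$, so $g$ is increasing for $x<\frac{1}{2q}$, i.e. for $x<1$ in the complex case and $x<\frac12$ over the reals. Since $t_i=\lambda_{\ell,i}/\tr{{\bf R}_\ell}\le 1$ always, and in fact $t_i\le\frac12$ whenever $N\ge 2$ (because the largest eigenvalue cannot exceed half the trace once there are at least two positive eigenvalues; more carefully, one uses $\lambda_{\ell,i}\le\tr{{\bf R}_\ell}-\lambda_{\ell,1}$ together with a lower bound on $\lambda_{\ell,1}$, or simply notes $g$ is increasing on $[0,1)$ which suffices in the complex case and, combined with $t_i\le\frac12$, in the real case), the minimum of $g(t_i)$ is attained at the smallest $t_i$, which corresponds to $\lambda_{\min}({\bf R}_\ell)$. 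Substituting gives the claimed identity. The main (minor) obstacle is this monotonicity/range bookkeeping, ensuring the argument of $g$ stays in the region where $g$ is increasing so that the $\min$ over columns passes through $g$ correctly; everything else is the direct column-sum computation.
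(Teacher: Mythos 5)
Your computation of the column sums is exactly the paper's: nonnegativity of the entries of ${\bf F}_\ell$, the identity $F_{\ell,i}=\rho_{\ell,i}+4\mu_\ell^2\lambda_{\ell,i}\tr{{\bf R}_\ell}$, and the algebra reducing this to $1-\tfrac{8}{9}g(\lambda_{\ell,i}/\tr{{\bf R}_\ell})$ are all correct and match the paper. The complex case ($q=\tfrac12$) is also fine, since there $g$ is increasing on all of $[0,1]$.

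The gap is in the real case ($q=1$). Your key claim that $t_i=\lambda_{\ell,i}/\tr{{\bf R}_\ell}\le\tfrac12$ whenever $N\ge 2$ is false: a positive definite matrix with eigenvalues $0.99$ and $0.01$ has $\lambda_{\max}/\tr{{\bf R}}=0.99$. Having at least two positive eigenvalues only gives $\lambda_{\max}<\tr{{\bf R}}$, not $\lambda_{\max}\le\tfrac12\tr{{\bf R}}$, and your hedged alternative (``a lower bound on $\lambda_{\ell,1}$'') is not supplied and cannot work without extra assumptions. So for $q=1$ the argument of $g$ can land in the region where $g$ is decreasing, and you cannot conclude $\min_i g(t_i)=g(t_1)$ from monotonicity alone. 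The paper closes exactly this case differently: by concavity of $g$, the minimum of $g$ over $\{t_1,\dots,t_N\}$ is attained at $t_1$ or $t_N$; then, using $t_N\le 1-(N-1)t_1$ and the fact that $g$ is decreasing on $(\tfrac12,1]$, one gets $g(t_N)\ge g\bigl(1-(N-1)t_1\bigr)$, and an explicit computation shows
\begin{equation*}
g\bigl(1-(N-1)t_1\bigr)-g(t_1)=N(N-2)\,t_1\Bigl(\tfrac{1}{N}-t_1\Bigr)\ge 0
\end{equation*}
since $N\ge2$ and $t_1\le\tfrac1N$. You need this (or an equivalent) argument to finish the real case; as written, your proof only establishes the lemma for complex LMS.
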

\begin{proof}
Since ${\bf F}_\ell$ is symmetric and its entries are all non-negative, we can write that
\begin{equation}
\label{eq:F_ell_def}
\|{\bf F}_\ell\|_{1}=\dmax_{i}{F_{\ell,i}},
\end{equation}
where $F_{\ell,i}\triangleq\dsum_{j}{\left({\bf F}_\ell\right)_{i,j}}$.
From (\ref{eq:F}) we have that
\begin{equation}
F_{\ell,i}=\rho_{\ell,i}+4\mu_\ell^2\lambda_{\ell,i}\dsum_{j}{\lambda_{\ell,j}}
\end{equation}
Since $\mu_\ell=\frac{1}{3\tr{{\bf R}_\ell}}$, $\tr{R}_\ell=\sum_{j}{\lambda_{\ell,j}}$ and\\$\rho_{\ell,i}=1-\frac{4}{3}\frac{\lambda_{\ell,i}}{\tr{{\bf R}_\ell}}+\frac{8q}{9}\left(\frac{\lambda_{\ell,i}}{\tr{{\bf R}_\ell}}\right)^2$ we can write that
\begin{align}
\label{eq:proof_that_eigF_less_than_1}
F_{\ell,i}&=1-\frac{4}{3}\frac{\lambda_{\ell,i}}{\tr{{\bf R}_\ell}}+\frac{8q}{9}\left(\frac{\lambda_{\ell,i}}{\tr{{\bf R}_\ell}}\right)^2+4\mu_\ell^2\lambda_{\ell,i}\tr{{\bf R}_\ell}\nonumber\\
&=1-\frac{4}{3}\frac{\lambda_{\ell,i}}{\tr{{\bf R}_\ell}}+\frac{8q}{9}\left(\frac{\lambda_{\ell,i}}{\tr{{\bf R}_\ell}}\right)^2+\frac{4}{9}\frac{\lambda_{\ell,i}}{\tr{{\bf R}_\ell}}\nonumber\\
&=1-\frac{8}{9}\left(\frac{\lambda_{\ell,i}}{\tr{{\bf R}_\ell}}-q\left(\frac{\lambda_{\ell,i}}{\tr{{\bf R}_\ell}}\right)^2\right)\nonumber\\
&=1-\frac{8}{9}g\left(\frac{\lambda_{\ell,i}}{\tr{{\bf R}_\ell}}\right).
\end{align}
For complex-LMS, $g(x)=x-\frac{1}{2}x^2$ is monotonically increasing in $x\in[0,1]$ and hence the maxima in (\ref{eq:F_ell_def}) is obtained for $i=1$ and the proof is completed.

In what follows, we show that $\frac{\lambda_{\ell,1}}{\tr{{\bf R}_\ell}}$ is a minimizer of $g\left(\frac{\lambda_{\ell,i}}{\tr{{{\bf R}_\ell}}}\right)$ also for the real case and hence the lemma follows. To that end, it is convenient denote $\psi_{\ell,i}\triangleq\frac{\lambda_{\ell,i}}{\tr{{\bf R}_\ell}}$ for $i=1,2,\cdots,N$. Hence, combining (\ref{eq:F_ell_def}) and (\ref{eq:proof_that_eigF_less_than_1}) results in
\begin{equation}
\|{\bf F}_\ell\|=1-\frac{8}{9}\min_{i}{g\left(\psi_{\ell,i}\right)}.
\end{equation}

Since $g(x)$ is concave, for any $\psi_{\ell,1}\leq \psi_{\ell,k} \leq \psi_{\ell,N}$ $k=2,\cdots,N-1$ we have that $g\left(\psi_{\ell,k}\right)\geq\min\{g\left(\psi_{\ell,1}\right),g\left(\psi_{\ell,N}\right)\}$. Hence,
\begin{equation}
\min_{i}{g\left(\psi_{\ell,i}\right)}=\min\{g\left(\psi_{\ell,1}\right)\;\;,\;\;g\left(\psi_{\ell,N}\right)\}.
\end{equation}
Note that $g(x)$ is monotonically increasing when $x\in[0,0.5]$ and monotonically decreasing when $x\in(0.5,1]$. Hence, it is obvious that if $\psi_{\ell,N}\leq 0.5$ we have that
\begin{equation*}
g\left(\psi_{\ell,1}\right)=\min\{g\left(\psi_{\ell,1}\right),g\left(\psi_{\ell,N}\right)\}. \end{equation*}
On the other hand, since $\dsum_{i=1}^{N}{\psi_{\ell,i}}=1$ and $\psi_{\ell,j}>0$ for all $j$ we have that
\begin{equation*}
\psi_{\ell,N}=1-\dsum_{i=1}^{N-1}{\psi_{\ell,i}}\leq 1-(N-1)\psi_{\ell,1}.
\end{equation*}
Hence, if $\psi_{\ell,N}>0.5$, we have that $g\left(\psi_{\ell,N}\right)\geq g\left(1-\left(N-1\right)\psi_{\ell,1}\right)$. Denote
$\Delta_g=g\left(1-\left(N-1\right)\psi_{\ell,1}\right)-g\left(\psi_{\ell,1}\right)$.
It is easy to see that
\begin{equation*}
\Delta_g=N(N-2)\psi_{\ell,1}\left(\frac{1}{N}-\psi_{\ell,1}\right).
\end{equation*}
Since $N\geq 2$ and $\psi_{\ell,1}\leq\frac{1}{N}\dsum_{j}{\psi_{\ell,j}}=\frac{1}{N}$, we have that $\Delta_g\geq 0$.
Hence, $g\left(\psi_{\ell,1}\right)=\min\{g\left(\psi_{\ell,1}\right),g\left(\psi_{\ell,N}\right)\}$ and the proof is completed.
\end{proof}
Since $N\geq 2$, we have that $\frac{\lambda_{\ell,i}}{\tr{{\bf R}_\ell}}\leq 0.5$. Hence, from the monotonic property of $g(x)$ in the region $x\leq 0.5$ and Lemma \ref{lemma:bound_condition_norm1F_MSE_minimal_eig_to_traceR}(\ref{item:minimal_eigenvalue_to_traceR_lower_bound_by_SINR}) we have that
\begin{equation}
\label{eq:g_psi1_geq_g_Bj}
g\left(\frac{\lambda_{\ell,i}}{\tr{{\bf R}_\ell}}\right)
\geq g\left(\frac{1}{N}\frac{1+\alpha(\Phi)}{\Phi+1}\right).
\end{equation}
Lemma \ref{lemma:upper_bound_on_l1_norm_of_F} now follows.
\end{proof}
\begin{proof}[{\bf Proof of Lemma \ref{lemma:SINR_MSE_relation}}]
From (\ref{eq:r}), (\ref{eq:Deep_LMS_u}) and (\ref{eq:Deep_LMS_x}) we have that
\begin{equation}
\label{eq:Deep_LMS_x_explicit}
  {\bf x}[n]={\bf W}^H[n]{\bf W}_{\rm{P}}^H[n]{\bf H}^H[n]{\bf d}[n]+
             {\bf W}^H[n]{\bf W}_{\rm{P}}^H[n]{\bs \nu}[n].
\end{equation}
Denote the effective channel by $\check{\bf H}[n]\triangleq{\bf H}[n]{\bf W}_{\rm{P}}[n]{\bf W}[n]$ and the effective noise by $\check{\bs\nu}[n]\triangleq{\bf W}^H[n]{\bf W}_{\rm{P}}^H[n]{\bs \nu}[n]$ and denote the variance of $\check{\nu}_i[n]$ by $\check{\sigma}^2_i$. Thus, we can rewrite (\ref{eq:Deep_LMS_x_explicit}) by
\begin{equation}
{\bf x}[n]=\check{\bf H}^H[n]{\bf d}[n]+\check{\bs\nu}[n].
\end{equation}

It is easy to verify that the MSE at the $i$-th output of the Deep-LMS is given by
\begin{equation}
\label{eq:MSE_check}
\epsilon_i[n]=E\{|d_i[n]-x_i[n]|^2\}=|1-\check{h}_{ii}|^2+\dsum_{j\neq i}{|\check{h}_{jj}|^2}+\check{\sigma}^2_i.
\end{equation}
On the other hand, the SINR at the $i$-th output of the Deep-LMS is given by
\begin{equation}
\label{eq:SINR_check}
  \Psi_i[n]=\frac{|\check{h}_{ii}|^2}{\dsum_{j\neq i}{|\check{h}_{jj}|^2}+\check{\sigma}^2_i}.
\end{equation}
Thus, combining (\ref{eq:MSE_check}) and (\ref{eq:SINR_check}) yields
\begin{equation}
\label{eq:SINR_MSE_relation_H_check}
  \Psi_i[n]=\frac{|\check{h}_{ii}|^2}{\epsilon_i[n]-|1-\check{h}_{ii}|^2}.
\end{equation}

For $n_\ell\in\mathcal{U}$ we have that $\check{h}_{ii}[n]=1$ since at these times
${\bf W}[n]$ is initiated to the identity matrix and ${\bf W}_{\rm{P}}$ is normalized such that it has $1$ on the diagonal. Hence, (\ref{item:n_ell_SINR_MSE}) follows.

Obviously, we can always write that
\begin{equation*}
\Psi_i[n]\geq\frac{|\check{h}_{ii}|^2}{\epsilon_i[n]}.
\end{equation*}
Hence, in the case that $|\check{h}_{ii}|\geq1$ (\ref{item:high_SINR_MSE}) follows.
Therefore, in the rest of the proof we assume that $|\check{h}_{ii}|<1$.

We observe that
\begin{equation}
|1-\check{h}_{ii}|^2\geq\left(1-|\check{h}_{ii}|\right)^2.
\end{equation}
Hence, from (\ref{eq:SINR_MSE_relation_H_check}) it is implied that
\begin{align}
\Psi_i[n]&\geq\frac{|\check{h}_{ii}|^2}{\epsilon_i[n]-\left(1-|\check{h}_{ii}|\right)^2}\nonumber\\
&\geq\dmin_{1>\omega>0}\Omega(\omega),
\end{align}
where $\Omega(\omega)=\frac{\omega^2}{\epsilon_i[n]-(1-\omega)^2}$.
It is easy to verify that if \\$\epsilon_i[n]<1$, $\Omega(\omega)$ is minimized when $\omega=1-\epsilon$, hence
\begin{equation}
  \Psi_i[n]\geq \Omega\bigg(1-\epsilon_i[n]\biggr)=\epsilon_i^{-1}[n]-1.
\end{equation}
\end{proof}

\bibliographystyle{IEEEtran}
\bibliography{GfastLMS}
\end{document}